\documentclass[final,3p]{elsarticle}
\usepackage{lineno,hyperref}
\usepackage{amsmath}
\usepackage{amssymb}
\usepackage[linesnumbered,ruled,vlined]{algorithm2e}
\usepackage{caption}
\usepackage{mathtools}
\usepackage{changes}
\usepackage{multirow}
\usepackage{verbatim}
\usepackage{mathrsfs}
\usepackage{graphicx}
\usepackage{subcaption}
\usepackage{amsmath,amsthm,bm,mathrsfs}
\theoremstyle{plain}
\newtheorem{theorem}{Theorem}[section]

\newtheorem{proposition}[theorem]{Proposition}

\theoremstyle{definition}

\theoremstyle{remark}

\modulolinenumbers[5]

\journal{ArXiv.org}

\begin{document}

\begin{frontmatter}

\title{A New Low-Rank Cholesky-Factor ADI Algorithm Allowing Shifts Anywhere in the Complex Plane with Applications to Data-Driven Model Reduction}

\author[uz]{Umair~Zulfiqar\corref{mycorrespondingauthor}}
\cortext[mycorrespondingauthor]{Corresponding author}
\ead{umair@yangtzeu.edu.cn}
\address[uz]{School of Electronic Information and Electrical Engineering, Yangtze University, Jingzhou, Hubei, 434023, China}
\begin{abstract}
The low-rank Cholesky factor alternating direction implicit (LRCF-ADI) iteration method \cite{benner2013efficient,benner2013reformulated} is an effective and efficient approach for computing low-rank solutions to large-scale Lyapunov equations in the form \(P\approx ZZ^\top\). This form is useful for balanced truncation, as the square-root algorithm requires computing the controllability and observability Gramians in this form. The standard LRCF-ADI method requires all ADI shifts to have negative real parts, which can be restrictive for applications like frequency-limited and data-driven balanced truncation, where purely imaginary ADI shifts are a more suitable choice. This paper proposes a new LRCF-ADI method where the ADI shifts can be located anywhere in the complex plane, including on the imaginary axis. The proposed generalized LRCF-ADI algorithm reduces to the standard LRCF-ADI algorithm as a special case. The new method is also extended to solve frequency-limited Lyapunov equations, time-limited Lyapunov equations, and Riccati equations. Approximations of matrix logarithm and matrix exponential products using the proposed method are also discussed.

LRCF-ADI-based reduced models for balanced truncation can be constructed non-intrusively from transfer function samples at the mirror images of the ADI shifts, without accessing the state-space realization of the original model. Since the standard LRCF-ADI method requires all ADI shifts to have negative real parts, its non-intrusive implementation requires samples in the right half of the complex plane, which cannot be measured in an experimental setting. However, the ADI shifts in the proposed method can lie on the imaginary axis. Exploiting this property, we also propose a data-driven low-rank balanced truncation algorithm that requires only transfer function samples on the imaginary axis, which can be measured experimentally. The effectiveness of the proposed algorithms is demonstrated through benchmark numerical examples. 
\end{abstract}

\begin{keyword}
Cholesky factors\sep Low-rank\sep Lyapunov equations\sep Projection\sep Pole preservation\sep Rational interpolation
\end{keyword}

\end{frontmatter}

\section{Introduction}
Consider the $n$th-order linear time-invariant (LTI) dynamical system $G(s)$ with state-space realization
\begin{align}
G(s)=C(sE-A)^{-1}B+D,\nonumber
\end{align}
where $E\in\mathbb{R}^{n\times n}$, $A\in\mathbb{R}^{n\times n}$, $B\in\mathbb{R}^{n\times m}$, $C\in\mathbb{R}^{p\times n}$, and $D\in\mathbb{R}^{p\times m}$. Throughout this paper, we assume that $E$ is invertible and that all eigenvalues of $E^{-1}A$ lie in the left half of the complex plane, i.e., $E^{-1}A$ is Hurwitz.

The controllability Gramian $P$ and observability Gramian $Q$ associated with the realization $(A,B,C,D,E)$ satisfy the following Lyapunov equations:
\begin{align}
APE^\top+EPA^\top +BB^\top &=0,\label{lyap_p}\\
A^\top QE+E^\top Q A +C^\top C &=0.
\end{align}
If $G(s)$ is symmetric, i.e., $G(s)=G^\top(-s)$, then the cross Gramian associated with $(A,B,C,D,E)$ satisfies the following Sylvester equation:
\begin{align}
AXE+EXA+BC=0,\label{cross_sylv}
\end{align}
where $X^2=PQ$.
\subsection{The low-rank Cholesky factor alternating direction implicit (LRCF-ADI) method for Lyapunov equations \cite{benner2013efficient,benner2013reformulated}}
Let $\{\alpha_i\}_{i=1}^{k}\subset\mathbb{C}_{-}$ be the ADI shifts used in the LRCF-ADI method \cite{benner2013efficient,benner2013reformulated} to approximate the Lyapunov equation \eqref{lyap_p}. Define $v_i^{\mathrm{adi}}$ by
\[
v_i^{\mathrm{adi}} = (A + \alpha_i E)^{-1}B_{\perp,i-1}^{\mathrm{adi}},
\]
where 
\[
B_{\perp,i}^{\mathrm{adi}} = B_{\perp,i-1}^{\mathrm{adi}} - 2\operatorname{Re}(\alpha_i) \, E v_i^{\mathrm{adi}},
\]
with $B_{\perp,0}^{\mathrm{adi}}=B$.

Next, define $Z_p^{(k)}$ by
\[
Z_p^{(k)}=\begin{bmatrix}\sqrt{-2\operatorname{Re}(\alpha_1)}v_1^{\mathrm{adi}}&\cdots&\sqrt{-2\operatorname{Re}(\alpha_k)}v_k^{\mathrm{adi}}\end{bmatrix}.
\]
Then, the low-rank approximation $P\approx\tilde{P}=Z_p^{(k)}(Z_p^{(k)})^*$ obtained from the LRCF-ADI method \cite{benner2013reformulated} satisfies the residual relation
\[
A\tilde{P}E^\top+E\tilde{P}A^\top +BB^\top = B_{\perp,k}^{\mathrm{adi}}(B_{\perp,k}^{\mathrm{adi}})^\top.
\]
\subsection{Factored ADI (fADI) method for Sylvester equations \cite{tu2009adi,benner2014computing}}
Let $\{\alpha_i\}_{i=1}^{k}\subset\mathbb{C}$ and $\{\beta_i\}_{i=1}^{k}\subset\mathbb{C}$ be ADI shifts satisfying $\alpha_i\neq-\beta_i$, used to approximate the Sylvester equation \eqref{cross_sylv} with the fADI method \cite{tu2009adi,benner2014computing}. Define $v_i^{\mathrm{fadi}}$ and $w_i^{\mathrm{fadi}}$ by
\begin{align}
v_i^{\mathrm{fadi}}&=(A + \alpha_i E)^{-1}B_{\perp,i-1}^{\mathrm{fadi}},\nonumber\\
w_i^{\mathrm{fadi}} &=(A^\top + \overline{\beta_i} E^\top)^{-1}(C_{\perp,i-1}^{\mathrm{fadi}})^*,\nonumber
\end{align}
where
\begin{align}
B_{\perp,i}^{\mathrm{fadi}} &= B_{\perp,i-1}^{\mathrm{fadi}} - (\alpha_i + \beta_i) E v_i^{\mathrm{fadi}},\nonumber\\
C_{\perp,i}^{\mathrm{fadi}} &= C_{\perp,i-1}^{\mathrm{fadi}} - (\alpha_i + \beta_i) (w_i^{\mathrm{fadi}})^* E,\nonumber
\end{align}
with $B_{\perp,0}^{\mathrm{fadi}}=B$ and $C_{\perp,0}^{\mathrm{fadi}}=C$.

Next, define $V_{\mathrm{fadi}}^{(k)}$, $W_{\mathrm{fadi}}^{(k)}$, and $J_{\mathrm{fadi}}^{(k)}$ by
\begin{align}
V_{\mathrm{fadi}}^{(k)}=\begin{bmatrix}v_1^{\mathrm{fadi}}&\cdots&v_k^{\mathrm{fadi}}\end{bmatrix},\quad
W_{\mathrm{fadi}}^{(k)}=\begin{bmatrix}w_1^{\mathrm{fadi}}&\cdots&w_k^{\mathrm{fadi}}\end{bmatrix},\quad
J_{\mathrm{fadi}}^{(k)}=\begin{bmatrix}j_{\mathrm{fadi}}^{(1)}&\cdots&0\\\vdots&\ddots&\vdots\\0&\cdots&j_{\mathrm{fadi}}^{(k)}\end{bmatrix},\nonumber
\end{align}
where $j_{\mathrm{fadi}}^{(i)}=-(\alpha_i+\beta_i)$.

Then, the low-rank approximation $X\approx\tilde{X}=V_{\mathrm{fadi}}^{(k)}(J_{\mathrm{fadi}}^{(k)}\otimes I_m)(W_{\mathrm{fadi}}^{(k)})^*$ obtained from the fADI method satisfies the residual relation
\[
A\tilde{X}E+E\tilde{X}A+BC=B_{\perp,k}^{\mathrm{fadi}}C_{\perp,k}^{\mathrm{fadi}}.
\]
\section{A Generalized LRCF-ADI Algorithm}
In this section, the condition that the shifts $\alpha_i$ must have negative real parts is dropped, while $P$ is still approximated in the form $P\approx ZZ^*$, leading to a Generalized LRCF-ADI (G-LRCF-ADI) algorithm. To this end, the fADI method \cite{tu2009adi,benner2014computing} is used to approximate \eqref{lyap_p}, requiring only the condition $\alpha_i\neq-\beta_i$. The equivalent realified version of the G-LRCF-ADI algorithm, which approximates $P$ as $P\approx ZZ^\top$, is also presented. It is further shown that G-LRCF-ADI can be used to approximate several other matrix equations and products.

Define $\hat{A}^{(k)}$, $\hat{B}^{(k)}$, and $\hat{C}^{(k)}$ by
\begin{align}
\hat{A}^{(k)}=\begin{bmatrix}\hat{A}^{(k-1)}&0\\A_{21}^{(k)}&\beta_kI_m\end{bmatrix},\quad \hat{B}^{(k)}=\begin{bmatrix}\hat{B}^{(k-1)}\\-j_{\mathrm{fadi}}^{(k)}I_m\end{bmatrix},\quad
\hat{C}^{(k)}=CV_{\mathrm{fadi}}^{(k)},\label{interpolant1}
\end{align}
where $A_{21}^{(k)}=(-j_{\mathrm{fadi}}^{(k)}\mathbf{1}_{k-1}^\top)\otimes I_m$ and $\mathbf{1}_{k}$ is a $k\times 1$ column vector of ones \cite{zulfiqar2026unified}. A MATLAB function named \texttt{compute\_A\_B} is provided in \cite{mycodes} to construct $(\hat{A}^{(k)},\hat{B}^{(k)})$ from the shifts $\alpha_i$ and $\beta_i$.

Next, define $\hat{G}^{(k)}(s)$ by $\hat{G}^{(k)}(s)=\hat{C}^{(k)}(sI-\hat{A}^{(k)})^{-1}\hat{B}^{(k)}+D$. Assuming that $V_{\mathrm{fadi}}^{(k)}$ is full column rank, the following interpolation conditions hold:
\[
G(-\alpha_i)=\hat{G}^{(k)}(-\alpha_i),\quad i=1,\ldots,k.
\]
The reduced-order model (ROM) $\hat{G}^{(k)}(s)$ has poles at $\beta_i$ \cite{zulfiqar2026unified}.

Thus, when the shifts $\beta_i$ have negative real parts, i.e., $\{\beta_i\}_{i=1}^{k}\subset\mathbb{C}_{-}$, the projected Lyapunov equation
\begin{align}
\hat{A}^{(k)}\hat{P}^{(k)}+\hat{P}^{(k)}(\hat{A}^{(k)})^*+\hat{B}^{(k)}(\hat{B}^{(k)})^*=0\label{proj_lyap}
\end{align}
has a unique solution.

We now show that $\hat{P}^{(k)}$ can be factorized as $\hat{P}^{(k)}=Z^{(k)}(Z^{(k)})^*$ and that an analytical formula for $Z^{(k)}$ exists, depending solely on the ADI shifts $\alpha_i$ and $\beta_i$ when $\{\beta_i\}_{i=1}^{k}\subset\mathbb{C}_{-}$.

Define $S_{\mathrm{fadi}}^{(k)}$ and $L_{\mathrm{fadi}}^{(k)}$ by
\begin{align}
S_{\mathrm{fadi}}^{(k)}=\begin{bmatrix}S_{\mathrm{fadi}}^{(k-1)}&-(J_{\mathrm{fadi}}^{(k-1)})^*(L_{\mathrm{fadi}}^{(k-1)})^\top\\0&s_{\mathrm{fadi}}^{(k)}\end{bmatrix}\quad\text{and}\quad L_{\mathrm{fadi}}^{(k)}=\begin{bmatrix}L_{\mathrm{fadi}}^{(k-1)}&l_{\mathrm{fadi}}^{(k)}\end{bmatrix},
\end{align}
where $s_{\mathrm{fadi}}^{(k)}=-\overline{\beta}_k$ and $l_{\mathrm{fadi}}^{(k)}=-1$.

The matrix $W_{\mathrm{fadi}}^{(k)}$ (which does not play any role in our context) solves the following Sylvester equation:
\[
A^\top W_{\mathrm{fadi}}^{(k)}-E^\top W_{\mathrm{fadi}}^{(k)} S_{\mathrm{fadi}}^{(k)}+C^\top L_{\mathrm{fadi}}^{(k)}=0;
\]
cf. \cite{zulfiqar2026unified}.
Furthermore, define $S_{\mathrm{adi}}^{(k)}$ and $L_{\mathrm{adi}}^{(k)}$ by
\begin{align}
S_{\mathrm{adi}}^{(k)}=\begin{bmatrix}S_{\mathrm{adi}}^{(k-1)}&(L_{\mathrm{adi}}^{(k-1)})^\top l_{\mathrm{adi}}^{(k)}\\0&s_{\mathrm{adi}}^{(k)}\end{bmatrix}\quad\text{and}\quad L_{\mathrm{adi}}^{(k)}=\begin{bmatrix}L_{\mathrm{adi}}^{(k-1)}&l_{\mathrm{adi}}^{(k)}\end{bmatrix},
\end{align}
where $s_{\mathrm{adi}}^{(k)}=-\overline{\beta_k}$ and $l_{\mathrm{adi}}^{(k)}=-\sqrt{-2\operatorname{Re}(\beta_k)}$.

If the ADI shifts $\alpha_i$ are replaced with $\overline{\beta}_i$ in the LRCF-ADI method, then the matrix $Z_p^{(k)}$ solves the following Sylvester equation:
\[
A Z_p^{(k)}-E Z_p^{(k)} S_{\mathrm{adi}}^{(k)}+B L_{\mathrm{adi}}^{(k)}=0;
\]
cf. \cite{zulfiqar2026unified}.

Let $T^{(k)}$ solve the following Sylvester equation:
\begin{align}
-(S_{\mathrm{adi}}^{(k)})^*T^{(k)}-T^{(k)}S_{\mathrm{fadi}}^{(k)}+(L_{\mathrm{adi}}^{(k)})^*L_{\mathrm{fadi}}^{(k)}=0;\label{trans_sylv}
\end{align}
cf. \cite{zulfiqar2026unified}.

The upper triangular matrix $T^{(k)}$ is given recursively by
\begin{equation}
    T^{(k)}({i,j}) = 
    \begin{cases}
        \displaystyle \frac{-\sqrt{-2\operatorname{Re}(\beta_i)} \left(1 + \sum_{q=1}^{i-1} \big(-\sqrt{-2\operatorname{Re}(\beta_q)}\big) T_{qi}^{(k)}\right)}{\beta_i + \overline{\beta}_i}, & i = j, \\[2ex]
        \displaystyle \frac{-\sqrt{-2\operatorname{Re}(\beta_i)} \left(1 + \sum_{q=1}^{i-1} \big(-\sqrt{-2\operatorname{Re}(\beta_q)}\big) T_{qj}^{(k)}\right) + \sum_{q=i}^{j-1} T_{iq}^{(k)} \overline{-(\alpha_q + \beta_q)}}{\beta_i + \overline{\beta}_j}, & i < j,
    \end{cases}\label{T_analy}
\end{equation}
for $1 \leq i \leq j \leq k$; cf. \cite{zulfiqar2026unified}. A MATLAB function named \texttt{build\_T\_recursive} is provided in \cite{mycodes} to construct $T^{(k)}$ from the shifts $\alpha_i$ and $\beta_i$.

The matrix $T^{(k)}$ can also be written as
\[
T^{(k)}=\begin{bmatrix}T^{(k-1)}&t_1^{(k)}\\0&t_2^{(k)}\end{bmatrix},
\]
where
\begin{align}
t_1^{(k)}(j) &= \frac{-\sqrt{-2\operatorname{Re}(\beta_j)} \left( 1 + \displaystyle\sum_{q=1}^{j-1} \big(-\sqrt{-2\operatorname{Re}(\beta_q)}\big) t_{1}^{(k)}(q) \right) + \displaystyle\sum_{q=j}^{k-1} \overline{-(\alpha_q + \beta_q)} T^{(k)}(j, q)}{\beta_j + \overline{\beta}_k},\nonumber\\
t_2^{(k)} &= \frac{-\sqrt{-2\operatorname{Re}(\beta_k)} \left( 1 + \displaystyle\sum_{q=1}^{k-1} \big(-\sqrt{-2\operatorname{Re}(\beta_q)}\big) t_{1}^{(k)}(q) \right)}{\beta_k + \overline{\beta}_k},\label{t_anly}
\end{align}
for $j=1,\ldots,k-1$. A MATLAB function named \texttt{com\_t\_col} is provided in \cite{mycodes} to construct $t_1^{(k)}$ and $t_2^{(k)}$ from the shifts $\alpha_i$ and $\beta_i$.

It is shown in \cite{zulfiqar2026unified} that $(S_{\mathrm{fadi}}^{(k)},L_{\mathrm{fadi}}^{(k)})$ is related to $(S_{\mathrm{adi}}^{(k)},L_{\mathrm{adi}}^{(k)})$ as follows:
\begin{align}
S_{\mathrm{fadi}}^{(k)}=\big(T^{(k)}\big)^{-1}S_{\mathrm{adi}}^{(k)}T^{(k)}\quad \text{and} \quad L_{\mathrm{fadi}}^{(k)}=L_{\mathrm{adi}}^{(k)}T^{(k)}.
\end{align}
The matrices $\hat{A}^{(k)}$ and $\hat{B}^{(k)}$ are related to $S_{\mathrm{fadi}}^{(k)}$, $L_{\mathrm{fadi}}^{(k)}$, and $J_{\mathrm{fadi}}^{(k)}$ as follows:
\begin{align}
\hat{A}^{(k)}=-\Big(\big(J_{\mathrm{fadi}}^{(k)}\big)\big(\big(T^{(k)}\big)^{-1}S_{\mathrm{adi}}^{(k)}T^{(k)}\big)^*\big(J_{\mathrm{fadi}}^{(k)}\big)^{-1}\Big)\otimes I_m\quad \text{and}\quad \hat{B}^{(k)}=\Big(J_{\mathrm{fadi}}^{(k)}(L_{\mathrm{adi}}^{(k)}T^{(k)})^*\Big)\otimes I_m;
\end{align}
cf. \cite{zulfiqar2026unified}.

As shown in \cite{wolf2014h}, the following Lyapunov equation holds:
\[
-(S_{\mathrm{adi}}^{(k)})^*-S_{\mathrm{adi}}^{(k)}+(L_{\mathrm{adi}}^{(k)})^\top L_{\mathrm{adi}}^{(k)}=0.
\]
Note that the pairs $(\hat{A}^{(k)},\hat{B}^{(k)})$ and $\big(-(S_{\mathrm{adi}}^{(k)}\otimes I_m)^*,(L_{\mathrm{adi}}^{(k)}\otimes I_m)^\top\big)$ are related via the similarity transformation $\big(J_{\mathrm{fadi}}^{(k)}(T^{(k)})^*\big)^{-1}\otimes I_m$. Consequently,
\[
\hat{P}^{(k)}=Z^{(k)}(Z^{(k)})^*=\Big(\big(J_{\mathrm{fadi}}^{(k)}(T^{(k)})^*\big)\otimes I_m\Big)\Big(\big(J_{\mathrm{fadi}}^{(k)}(T^{(k)})^*\big)\otimes I_m\Big)^*
\]
due to the relationship between the controllability Gramians of two similar state-space realizations.

Therefore, the fADI-based low-rank approximation of $P$ is given by
\begin{align}
P\approx \tilde{P}=\Big(V_{\mathrm{fadi}}^{(k)}\big(J_{\mathrm{fadi}}^{(k)}(T^{(k)})^*\big)\otimes I_m\Big)\Big(V_{\mathrm{fadi}}^{(k)}\big(J_{\mathrm{fadi}}^{(k)}(T^{(k)})^*\big)\otimes I_m\Big)^*.\label{approx}
\end{align}
A MATLAB function named \texttt{compute\_CF} is provided in \cite{mycodes} to construct $Z^{(k)}$ from the shifts $\alpha_i$ and $\beta_i$.

It is shown in \cite{zulfiqar2026unified} that $V_{\mathrm{fadi}}^{(k)}$ solves the following Sylvester equation:
\begin{align}
AV_{\mathrm{fadi}}^{(k)}-EV_{\mathrm{fadi}}^{(k)}\hat{A}^{(k)}+B_{\perp,k}^{\mathrm{fadi}}L_{\mathrm{fadi}}^{(k)}&=0.\label{proj_sylv}
\end{align}
In \cite{wolf2014h}, a general residual expression is derived for projection-based approximation of the Lyapunov equation when the projection matrix solves a Sylvester equation of the form \eqref{proj_sylv}. Thus, we can use that residual expression here.

Define $F^{(k)}$ by
\[
F^{(k)}=EV_{\mathrm{fadi}}^{(k)}\Big(\big(J_{\mathrm{fadi}}^{(k)}\otimes I_m\big)-\hat{P}^{(k)}\Big)\big(L_{\mathrm{fadi}}^{(k)}\otimes I_m\big)^\top.
\]
Then the residual for the approximation \eqref{approx} is given by
\begin{align}
A\tilde{P}E^\top+E\tilde{P}A^\top +BB^\top = B_{\perp,k}^{\mathrm{fadi}}(B_{\perp,k}^{\mathrm{fadi}})^*+B_{\perp,k}^{\mathrm{fadi}}(F^{(k)})^*+F^{(k)}(B_{\perp,k}^{\mathrm{fadi}})^*.\label{res_eq}
\end{align}
It can readily be observed that when $\alpha_i=\overline{\beta}_i\in\mathbb{C}_{-}$, the following holds:
\begin{align}
B_{\perp,k}^{\mathrm{fadi}}&=B_{\perp,k}^{\mathrm{adi}},\quad J_{\mathrm{fadi}}^{(k)}=J_{\mathrm{fadi}}^{(k)}(T^{(k)})^*T^{(k)}(J_{\mathrm{fadi}}^{(k)})^*,\quad
V_{\mathrm{fadi}}^{(k)}\Big(\big((J_{\mathrm{fadi}}^{(k)})^{\frac{1}{2}}\otimes I_m\big)\Big)=Z_p^{(k)},\quad \text{and}\quad F^{(k)}=0.
\end{align}
Thus, the fADI-based approximation of $P$ is identical to that produced by the LRCF-ADI method when $\alpha_i=\overline{\beta}_i\in\mathbb{C}_{-}$. In other words, the poles $\beta_i$ of the matrix $\hat{A}^{(k)}$ in the LRCF-ADI method are conjugates of the ADI shifts $\alpha_i$, leading to the restriction that the ADI shifts $\alpha_i$ must have negative real parts, since this is necessary to ensure that $\hat{A}^{(k)}$ is Hurwitz and $\hat{P}^{(k)}$ has a unique solution. However, the fADI-based approximation of $P$ is much more flexible because the poles $\beta_i$ of $\hat{A}^{(k)}$ are not tied to the ADI shifts $\alpha_i$. Hence, only the $\beta_i$ need to have negative real parts to ensure that $\hat{A}^{(k)}$ is Hurwitz and $\hat{P}^{(k)}$ has a unique solution. The LRCF-ADI method for approximating $P$ is thus a special case of the fADI-based approximation of $P$, where the ADI shifts and the poles of $\hat{A}^{(k)}$ are coupled. While the pole-placement property of fADI was identified in \cite{zulfiqar2026unified}, it was not recognized that fADI can be used to approximate $P$ in the factorized form $P\approx Z^{(k)} (Z^{(k)})^*$, leading to the G-LRCF-ADI method.

Note that fADI has been used to approximate $P$ in the literature \cite{tu2009adi}. Shifts $\alpha_i$ with non-negative real parts have also been used \cite{tu2009adi}. However, those approximations focus on replacing $\beta_i$, $A^\top$, and $C^\top$ with $\overline{\alpha}_i$, $A$, and $B$, respectively, to approximate $P$ as $P\approx V_{\mathrm{fadi}}^{(k)}\big(J_{\mathrm{fadi}}^{(k)}\otimes I_m\big) (V_{\mathrm{fadi}}^{(k)})^*$. The issue with this approximation from a projection perspective is that $J_{\mathrm{fadi}}^{(k)}\otimes I_m$ is the unique positive-definite solution of the projected Lyapunov equation \eqref{proj_lyap} only when $\alpha_i\in\mathbb{C}_{-}$. Violating this condition means that $\hat{A}^{(k)}$ is not Hurwitz, so $J_{\mathrm{fadi}}^{(k)}$ is not positive-definite and thus cannot be decomposed into Cholesky-like factors. The approximation presented in this section requires that $\beta_i\in\mathbb{C}_{-}$, whereas the choice of $\alpha_i$ is not tied to $\beta_i$. Thus, in our approximation, $\big(J_{\mathrm{fadi}}^{(k)}(T^{(k)})^*\otimes I_m\big)\big(J_{\mathrm{fadi}}^{(k)}(T^{(k)})^*\otimes I_m\big)^*$ is always a unique positive-definite solution of the projected Lyapunov equation \eqref{proj_lyap} as long as $\alpha_i\neq-\beta_i$.

The pseudocode for G-LRCF-ADI is given in Algorithm \ref{alg1}. When $\alpha_i=\overline{\beta}_i\in\mathbb{C}_{-}$, G-LRCF-ADI reduces to the standard LRCF-ADI. A MATLAB function named \texttt{G\_LRCF\_ADI} that implements G-LRCF-ADI is available in \cite{mycodes}.
\begin{algorithm}[!h]
\caption{G-LRCF-ADI}\label{alg1}
\KwIn{Matrices of the Lyapunov equation \eqref{lyap_p}: $(E,A,B)$; ADI shifts: $\{\alpha_i\}_{i=1}^{k}\in\mathbb{C}$; Desired poles: $\{\beta_i\}_{i=1}^{k}\in\mathbb{C}_{-}$ satisfying $\beta_i\neq-\alpha_i$; Tolerance: $1>\tau>0$.}
\KwOut{Approximation: $P\approx \big(V_{\mathrm{fadi}}^{(k)}Z^{(k)}\big)\big(V_{\mathrm{fadi}}^{(k)}Z^{(k)}\big)^*$}

\textbf{Initialize:} $k=1$, $V_{\mathrm{fadi}}^{(0)}=[\;]$, $J^{(0)}=[\;]$, $T^{(0)}=[\;]$, $B_{\perp}=B$, and $F=[\;]$.

\While{$\frac{\|B_{\perp}B_{\perp}^*+B_{\perp}F^*+FB_{\perp}^*\|}{\|BB^\top\|}>\tau$}{
    Solve for $v_k$: $\big(A + \alpha_k E\big)v_k=B_{\perp}$.
    
    Set $\begin{bmatrix}t_1^{(k)}\\t_2^{(k)}\end{bmatrix}$ as in \eqref{t_anly}.
    
    Expand: $T^{(k)}=\begin{bmatrix}T^{(k-1)}&t_1^{(k)}\\0&t_2^{(k)}\end{bmatrix}$, $V_{\mathrm{fadi}}^{(k)}=\begin{bmatrix}V_{\mathrm{fadi}}^{(k-1)}&v_k\end{bmatrix}$, $J^{(k)}=\begin{bmatrix}J^{(k-1)}&0\\0&-(\alpha_k+\beta_k)I_m\end{bmatrix}$, and $Z^{(k)}=\begin{bmatrix}Z^{(k-1)}&0\\-\big((\alpha_k+\beta_k)(t_1^{(k)})^*\big)\otimes I_m&-(\alpha_k+\beta_k)(t_2^{(k)})^*I_m\end{bmatrix}$.
    
    Update: $B_{\perp} \gets B_{\perp} - (\alpha_k + \beta_k) E v_k$, $F\gets EV_{\mathrm{fadi}}^{(k)}\big(J^{(k)}-Z^{(k)}(Z^{(k)})^*\big)(-\mathbf{1}_{k}^\top\otimes I_m)^\top$, and
    $k\gets k+1$.
}
\end{algorithm}

In most practical situations, it is desirable to keep the matrices $V_{\mathrm{fadi}}^{(k)}$, $J_{\mathrm{fadi}}^{(k)}$, and $T^{(k)}$ real-valued. This can be achieved by grouping the shifts $\alpha_i$ and the desired poles $\beta_i$, and processing two shifts and two desired poles at a time when either the shifts or the desired poles are complex. The following grouping is suggested in \cite{benner2014computing} for realification:
\begin{enumerate}
  \item When $\alpha_i\in\mathbb{R}$ and $\beta_i\in\mathbb{R}$, no modification of fADI is required.
  \item When $\alpha_i\in\mathbb{C}$, $\alpha_{i+1}=\overline{\alpha}_i$, $\beta_i\in\mathbb{C}$, and $\beta_{i+1}=\overline{\beta}_i$,
  \begin{align}
 v_i^{\mathrm{fadi}}&=\begin{bmatrix}(A+\alpha_i E)^{-1}B_{\perp,i-1}^{\mathrm{fadi}}&(A+\overline{\alpha}_i E)^{-1}B_{\perp,i-1}^{\mathrm{fadi}}\end{bmatrix},\nonumber\\ s_{\mathrm{fadi}}^{(i)}&=\begin{bmatrix}-\operatorname{Re}(\beta_i)&\operatorname{Im}(\beta_i)\\-\operatorname{Im}(\beta_i)&-\operatorname{Re}(\beta_i)\end{bmatrix},\quad  l_{\mathrm{fadi}}^{(i)}=\begin{bmatrix}-1&0\end{bmatrix},\nonumber\\
 j_{\mathrm{fadi}}^{(i)}&=\begin{bmatrix} 
2g & \frac{\Delta + 2\operatorname{Im}(\beta_i)\delta}{\operatorname{Im}(\beta_i)} \\ 
\frac{\Delta - 2\operatorname{Im}(-\alpha_i)\delta}{\operatorname{Im}(-\alpha_i)} & \frac{g\big(2\operatorname{Im}(\beta_i)\operatorname{Im}(-\alpha_i) + \Delta\big)}{\operatorname{Im}(\beta_i)\operatorname{Im}(-\alpha_i)} 
\end{bmatrix},\label{jcase2}
  \end{align}
  where $g = \operatorname{Re}(-\alpha_i) - \operatorname{Re}(\beta_i)$, $\delta = \operatorname{Im}(-\alpha_i) - \operatorname{Im}(\beta_i)$, and $\Delta = |\alpha_i + \beta_i|^2$ \cite{benner2014computing,zulfiqar2026unified}.
  \item When $\alpha_i\in\mathbb{C}$, $\alpha_{i+1}=\overline{\alpha}_i$, $\beta_i\in\mathbb{R}$, and $\beta_{i+1}\in\mathbb{R}$,
   \begin{align}
 v_i^{\mathrm{fadi}}&=\begin{bmatrix}(A+\alpha_i E)^{-1}B_{\perp,i-1}^{\mathrm{fadi}}&(A+\overline{\alpha}_i E)^{-1}B_{\perp,i-1}^{\mathrm{fadi}}\end{bmatrix},\nonumber\\ s_{\mathrm{fadi}}^{(i)}&=\begin{bmatrix}-\beta_i&-1\\0&-\beta_{i+1}\end{bmatrix},\quad  l_{\mathrm{fadi}}^{(i)}=\begin{bmatrix}-1&0\end{bmatrix},\quad j_{\mathrm{fadi}}^{(i)}=\begin{bmatrix} 
g_1 + g_2 & -\Delta \\ 
\frac{g_1 g_2 - \operatorname{Im}(-\alpha_i)^2}{\operatorname{Im}(-\alpha_i)} & -\frac{g_1 \Delta}{\operatorname{Im}(-\alpha_i)}
\end{bmatrix},\label{jcase3}
  \end{align}
  where $g_1 = \operatorname{Re}(-\alpha_i) - \beta_i$, $g_2 = \operatorname{Re}(-\alpha_{i+1}) - \beta_{i+1}$, and $\Delta = \beta_{i+1}^2 - 2\operatorname{Re}(-\alpha_i)\beta_{i+1} + |-\alpha_i|^2$ \cite{benner2014computing,zulfiqar2026unified}.
  \item When $\alpha_i\in\mathbb{R}$, $\alpha_{i+1}\in\mathbb{R}$, $\beta_i\in\mathbb{C}$, and $\beta_{i+1}=\overline{\beta}_i$,
  \begin{align}
 v_i^{\mathrm{fadi}}&=\begin{bmatrix}(A+\alpha_i E)^{-1}B_{\perp,i-1}^{\mathrm{fadi}}&(A+\alpha_{i+1} E)^{-1}E(A+\alpha_i E)^{-1}B_{\perp,i-1}^{\mathrm{fadi}}\end{bmatrix},\nonumber\\ s_{\mathrm{fadi}}^{(i)}&=\begin{bmatrix}-\operatorname{Re}(\beta_i)&\operatorname{Im}(\beta_i)\\-\operatorname{Im}(\beta_i)&-\operatorname{Re}(\beta_i)\end{bmatrix},\quad  l_{\mathrm{fadi}}^{(i)}=\begin{bmatrix}-1&0\end{bmatrix},\quad j_{\mathrm{fadi}}^{(i)}=\begin{bmatrix} 
g_1 + g_2 & \frac{g_1 g_2 - \operatorname{Im}(\beta_i)^2}{\operatorname{Im}(\beta_i)} \\ 
\Delta & \frac{g_1 \Delta}{\operatorname{Im}(\beta_i)} 
\end{bmatrix},\label{jcase4}
  \end{align}
  where $g_1 = -\alpha_i - \operatorname{Re}(\beta_i)$, $g_2 = -\alpha_{i+1} - \operatorname{Re}(\beta_i)$, and $\Delta = \alpha_{i+1}^2 + 2\operatorname{Re}(\beta_i)\alpha_{i+1} + |\beta_i|^2$ \cite{benner2014computing,zulfiqar2026unified}.
\end{enumerate}

The corresponding real-valued projected matrices $\hat{A}^{(k)}=\hat{a}^{(k)}\otimes I_m$ and $\hat{B}^{(k)}=\hat{b}^{(k)}\otimes I_m$ are given by
\begin{align}
\hat{a}^{(k)}=\begin{bmatrix}\hat{a}^{(k-1)}&0\\\hat{a}_{21}^{(k)}&\hat{a}_{22}^{(k)}\end{bmatrix}\quad\text{and}\quad
\hat{b}^{(k)}=\begin{bmatrix}\hat{b}^{(k-1)}\\j_{\mathrm{fadi}}^{(k)}(l_{\mathrm{fadi}}^{(k)})^\top\end{bmatrix},
\end{align}
where $\hat{a}_{21}^{(k)}=-j_{\mathrm{fadi}}^{(k)}(l_{\mathrm{fadi}}^{(k)})^\top L_{\mathrm{fadi}}^{(k-1)}$ and $\hat{a}_{22}^{(k)}=-j_{\mathrm{fadi}}^{(k)}(s_{\mathrm{fadi}}^{(k)})^\top(j_{\mathrm{fadi}}^{(k)})^{-1}$; cf. \cite{zulfiqar2026unified}. A MATLAB function named \texttt{compute\_A\_B\_r} is provided in \cite{mycodes} to construct real-valued $(\hat{A}^{(k)},\hat{B}^{(k)})$ from the shifts $\alpha_i$ and $\beta_i$.

The real-valued $s_{\mathrm{adi}}^{(i)}$ and $l_{\mathrm{adi}}^{(i)}$ are given by
\begin{align}
s_{\mathrm{adi}}^{(i)} & =
\begin{cases} 
-\beta_i, \hspace*{5cm} \text{if } \operatorname{Im}(\beta_i) = 0, \\[6pt]
\gamma_{i}^2\begin{bmatrix} 
1 & \zeta_i\\ 
-\zeta_i & 0
\end{bmatrix}, \hspace*{3.5cm} \text{if } \operatorname{Im}(\beta_i) \neq 0,
\end{cases}\nonumber\\[6pt]
l_{\mathrm{adi}}^{(i)} &= 
\begin{cases} 
-\gamma_{i}, \hspace*{5cm} \text{if } \operatorname{Im}(\beta_i) = 0, \\[6pt]
-\sqrt{2}\gamma_{i}\begin{bmatrix} 
1 & 0
\end{bmatrix}, \hspace*{3.4cm} \text{if } \operatorname{Im}(\beta_i) \neq 0,
\end{cases}\label{sl_adi}
\end{align}
where $\gamma_{i}=\sqrt{-2\operatorname{Re}(\beta_i)}$, $\delta_{i}=\frac{\operatorname{Re}(\beta_i)}{\operatorname{Im}(\beta_i)}$, and $\zeta_i=\frac{\sqrt{1+\delta_i^2}}{2\delta_i}$; cf. \cite{zulfiqar2026unified}. A MATLAB function named \texttt{adi\_S\_L} is provided in \cite{mycodes} to construct real-valued $(S_{\mathrm{adi}}^{(k)}, L_{\mathrm{adi}}^{(k)})$.

Finally, the real-valued $T^{(k)}$ solves the Sylvester equation \eqref{trans_sylv} with real-valued $S_{\mathrm{adi}}^{(k)}$, $L_{\mathrm{adi}}^{(k)}$, $S_{\mathrm{fadi}}^{(k)}$, and $L_{\mathrm{fadi}}^{(k)}$; cf. \cite{zulfiqar2026unified}.  A MATLAB function named \texttt{fadi\_S\_L} is provided in \cite{mycodes} to construct real-valued $(S_{\mathrm{fadi}}^{(k)}, L_{\mathrm{fadi}}^{(k)})$. Unfortunately, the analytical formula \eqref{T_analy} for the complex-valued case is no longer valid in the real-valued case. Nevertheless, owing to the small size of $T^{(k)}$, the Sylvester equation \eqref{trans_sylv} can be solved cheaply.

The pseudocode for the proposed realified version of G-LRCF-ADI (G-LRCF-ADI$^{(\text{R})}$) is given in Algorithm $1^{(\text{R})}$. When $\alpha_i=\overline{\beta}_i\in\mathbb{C}_{-}$, G-LRCF-ADI$^{(\text{R})}$ reduces to the realified version of the standard LRCF-ADI \cite{benner2013reformulated}. A MATLAB function named \texttt{G\_LRCF\_ADI\_r} that implements G-LRCF-ADI$^{(\text{R})}$ is available in \cite{mycodes}.\\

\noindent\rule{\textwidth}{0.8pt}

\textbf{Algorithm 1$^{(\text{R})}$: G-LRCF-ADI$^{(\text{R})}$}

\noindent\rule{\textwidth}{0.8pt}

\textbf{Input:} Matrices of the Lyapunov equation \eqref{lyap_p}: $(E,A,B)$; ADI shifts: $\{\alpha_i\}_{i=1}^{k}\in\mathbb{C}$; Desired poles: $\{\beta_i\}_{i=1}^{k}\in\mathbb{C}_{-}$ satisfying $\beta_i\neq-\alpha_i$; Tolerance: $1>\tau>0$.

\textbf{Output:} Approximation: $P\approx \big(V_{\mathrm{fadi}}^{(k)}Z^{(k)}\big)\big(V_{\mathrm{fadi}}^{(k)}Z^{(k)}\big)^\top$.

1. \textbf{Initialize:} $k=1$, $V_{\mathrm{fadi}}^{(0)}=[\;]$, $J_{\mathrm{fadi}}^{(0)}=[\;]$, $T^{(0)}=[\;]$, $S_{\mathrm{adi}}^{(0)}=[\;]$, $L_{\mathrm{adi}}^{(0)}=[\;]$, $L_{\mathrm{fadi}}^{(0)}=[\;]$, $C_x^{(0)}=[\;]$, $B_{\perp}=B$, and $F=[\;]$.

2. \textbf{while} $\frac{\|B_{\perp}B_{\perp}^\top+B_{\perp}F^\top+FB_{\perp}^\top\|}{\|BB^\top\|}>\tau$ \textbf{do}

3. Solve $\big(A + \alpha_k E)v_k= B_{\perp}$ for $v_k$ and set $\gamma_k = \sqrt{-2\text{Re}(\beta_k)}$. 

4. \textbf{If} $\alpha_k\in\mathbb{R}$ and $\beta_k\in\mathbb{R}$ \textbf{do}

5. Set $j_{\mathrm{fadi}}^{(k)}=-(\alpha_k+\beta_k)$, $l_{\mathrm{fadi}}^{(k)}=-1$, $s_{\mathrm{adi}}^{(k)}=-\beta_k$, and $l_{\mathrm{adi}}^{(k)}=-\gamma_k$.

6. Expand $V_{\mathrm{fadi}}^{(k)} = \begin{bmatrix}V_{\mathrm{fadi}}^{(k-1)}&v_k\end{bmatrix}$, $J_{\mathrm{fadi}}^{(k)} = \mathrm{blkdiag}\big(J_{\mathrm{fadi}}^{(k-1)},j_{\mathrm{fadi}}^{(k)}\big)$, $S_{\mathrm{adi}}^{(k)}=\begin{bmatrix}S_{\mathrm{adi}}^{(k-1)}&(L_{\mathrm{adi}}^{(k-1)})^\top l_{\mathrm{adi}}^{(k)}\\0&s_{\mathrm{adi}}^{(k)}\end{bmatrix}$,\\
$L_{\mathrm{adi}}^{(k)}=\begin{bmatrix}L_{\mathrm{adi}}^{(k-1)}&l_{\mathrm{adi}}^{(k)}\end{bmatrix}$, and $L_{\mathrm{fadi}}^{(k)}=\begin{bmatrix}L_{\mathrm{fadi}}^{(k-1)}&l_{\mathrm{fadi}}^{(k)}\end{bmatrix}$.

7. Solve for $\begin{bmatrix}t_1^{(k)}\\t_2^{(k)}\end{bmatrix}$: $\Big(-(S_{\mathrm{adi}}^{(k)})^\top + \beta_k I\Big)\begin{bmatrix}t_1^{(k)}\\t_2^{(k)}\end{bmatrix} = \begin{bmatrix}(L_{\mathrm{adi}}^{(k-1)})^\top-T^{(k-1)} (C_{x}^{(k-1)})^\top\\l_{\mathrm{adi}}^{(k)}\end{bmatrix}$.

8. Expand $T^{(k)}=\begin{bmatrix}T^{(k-1)}&t_1^{(k)}\\0&t_2^{(k)}\end{bmatrix}$, $Z^{(k)}=\begin{bmatrix}Z^{(k-1)}&0\\\big(j_{\mathrm{fadi}}^{(k)}(t_1^{(k)})^\top\big)\otimes I_m&\big(j_{\mathrm{fadi}}^{(k)}(t_2^{(k)})^\top\big)\otimes I_m\end{bmatrix}$, and\\
$C_{x}^{(k)}=\begin{bmatrix}C_{x}^{(k-1)}&l_{\mathrm{fadi}}^{(k)}j_{\mathrm{fadi}}^{(k)}\end{bmatrix}$.

9. Update $B_{\perp} \gets B_{\perp} -(\alpha_k+\beta_k) E v_k$, $F\gets EV_{\mathrm{fadi}}^{(k)}\Big(\big(J_{\mathrm{fadi}}^{(k)}\otimes I_m\big)-Z^{(k)}(Z^{(k)})^\top\Big)\big(L_{\mathrm{fadi}}^{(k)}\otimes I_m\big)^\top$, and $k\gets k+1$.
                
10. \textbf{If} $\alpha_k\in\mathbb{C}$, $\alpha_{k+1}=\overline{\alpha}_k$, $\beta_k\in\mathbb{C}$, and $\beta_{k+1}=\overline{\beta}_k$ \textbf{do}

11. Set $j_{\mathrm{fadi}}^{(k)}$ and $l_{\mathrm{fadi}}^{(k)}$ as in \eqref{jcase2}, and $s_{\mathrm{adi}}^{(k)}$ and $l_{\mathrm{adi}}^{(k)}$ as in \eqref{sl_adi}.

12. Expand $V_{\mathrm{fadi}}^{(k)} = \begin{bmatrix}V_{\mathrm{fadi}}^{(k-1)}&\mathrm{Re}(v_k)&\mathrm{Im}(v_k)\end{bmatrix}$, $J_{\mathrm{fadi}}^{(k)} = \mathrm{blkdiag}\big(J_{\mathrm{fadi}}^{(k-1)},j_{\mathrm{fadi}}^{(k)}\big)$,\\
$S_{\mathrm{adi}}^{(k)}=\begin{bmatrix}S_{\mathrm{adi}}^{(k-1)}&(L_{\mathrm{adi}}^{(k-1)})^\top l_{\mathrm{adi}}^{(k)}\\0&s_{\mathrm{adi}}^{(k)}\end{bmatrix}$, $L_{\mathrm{adi}}^{(k)}=\begin{bmatrix}L_{\mathrm{adi}}^{(k-1)}&l_{\mathrm{adi}}^{(k)}\end{bmatrix}$, and $L_{\mathrm{fadi}}^{(k)}=\begin{bmatrix}L_{\mathrm{fadi}}^{(k-1)}&l_{\mathrm{fadi}}^{(k)}\end{bmatrix}$.
 
13. Solve $\Big(-(S_{\mathrm{adi}}^{(k)})^\top + \overline{\beta_k} I\Big)\begin{bmatrix}z_1^{(k)}\\z_2^{(k)}\end{bmatrix} = \begin{bmatrix}(L_{\mathrm{adi}}^{(k-1)})^\top-T^{(k-1)} (C_{x}^{(k-1)})^\top\\(l_{\mathrm{adi}}^{(k)})^\top\end{bmatrix}$ for $\begin{bmatrix}z_1^{(k)}\\z_2^{(k)}\end{bmatrix}$, and set $\begin{bmatrix}t_1^{(k)}\\t_2^{(k)}\end{bmatrix}=\begin{bmatrix}\mathrm{Re}(z_1^{(k)})&\mathrm{Im}(z_1^{(k)})\\\mathrm{Re}(z_2^{(k)})&\mathrm{Im}(z_2^{(k)})\end{bmatrix}$.

14. Expand $T^{(k)}=\begin{bmatrix}T^{(k-1)}&t_1^{(k)}\\0&t_2^{(k)}\end{bmatrix}$, $Z^{(k)}=\begin{bmatrix}Z^{(k-1)}&0\\\big(j_{\mathrm{fadi}}^{(k)}(t_1^{(k)})^\top\big)\otimes I_m&\big(j_{\mathrm{fadi}}^{(k)}(t_2^{(k)})^\top\big)\otimes I_m\end{bmatrix}$, and\\
$C_{x}^{(k)}=\begin{bmatrix}C_{x}^{(k-1)}&l_{\mathrm{fadi}}^{(k)}j_{\mathrm{fadi}}^{(k)}\end{bmatrix}$.

15. Update $B_{\perp} \gets B_{\perp} - E \begin{bmatrix}\mathrm{Re}(v_k)&\mathrm{Im}(v_k)\end{bmatrix}\big((j_{\mathrm{fadi}}^{(k)}(l_{\mathrm{fadi}}^{(k)})^\top)\otimes I_m\big)$, $F\gets EV_{\mathrm{fadi}}^{(k)}\Big(\big(J_{\mathrm{fadi}}^{(k)}\otimes I_m\big)-Z^{(k)}(Z^{(k)})^\top\Big)\big(L_{\mathrm{fadi}}^{(k)}\otimes I_m\big)^\top$, and $k\gets k+2$.

16. \textbf{If} $\alpha_k\in\mathbb{C}$, $\alpha_{k+1}=\overline{\alpha}_k$, $\beta_k\in\mathbb{R}$, and $\beta_{k+1}\in\mathbb{R}$ \textbf{do}

17. Set $j_{\mathrm{fadi}}^{(k)}$ and $l_{\mathrm{fadi}}^{(k)}$ as in \eqref{jcase3}, and $s_{\mathrm{adi}}^{(k)}$ and $l_{\mathrm{adi}}^{(k)}$ as in \eqref{sl_adi}.

18. Expand $V_{\mathrm{fadi}}^{(k)} = \begin{bmatrix}V_{\mathrm{fadi}}^{(k-1)}&\mathrm{Re}(v_k)&\mathrm{Im}(v_k)\end{bmatrix}$, $J_{\mathrm{fadi}}^{(k)} = \mathrm{blkdiag}\big(J_{\mathrm{fadi}}^{(k-1)},j_{\mathrm{fadi}}^{(k)}\big)$,\\
$S_{\mathrm{adi}}^{(k)}=\begin{bmatrix}S_{\mathrm{adi}}^{(k-1)}&(L_{\mathrm{adi}}^{(k-1)})^\top l_{\mathrm{adi}}^{(k)}\\0&s_{\mathrm{adi}}^{(k)}\end{bmatrix}$, $L_{\mathrm{adi}}^{(k)}=\begin{bmatrix}L_{\mathrm{adi}}^{(k-1)}&l_{\mathrm{adi}}^{(k)}\end{bmatrix}$, and $L_{\mathrm{fadi}}^{(k)}=\begin{bmatrix}L_{\mathrm{fadi}}^{(k-1)}&l_{\mathrm{fadi}}^{(k)}\end{bmatrix}$.

19. Solve for $\begin{bmatrix}z_1^{(k)}\\z_2^{(k)}\end{bmatrix}$: $\Big(-(S_{\mathrm{adi}}^{(k)})^\top + \beta_k I\Big)\begin{bmatrix}z_1^{(k)}\\z_2^{(k)}\end{bmatrix} = \begin{bmatrix}(L_{\mathrm{adi}}^{(k-1)})^\top-T^{(k-1)} (C_{x}^{(k-1)})^\top\\(l_{\mathrm{adi}}^{(k)})^\top\end{bmatrix}$.

20. Solve for $\begin{bmatrix}z_1^{(k+1)}\\z_2^{(k+1)}\end{bmatrix}$: $\Big(-(S_{\mathrm{adi}}^{(k)})^\top + \beta_{k+1} I\Big)\begin{bmatrix}z_1^{(k+1)}\\z_2^{(k+1)}\end{bmatrix} = -\begin{bmatrix}z_1^{(k)}\\z_2^{(k)}\end{bmatrix}$.

21. Set $\begin{bmatrix}t_1^{(k)}\\t_2^{(k)}\end{bmatrix}=\begin{bmatrix}z_1^{(k)}&z_1^{(k+1)}\\z_2^{(k)}&z_2^{(k+1)}\end{bmatrix}$, and expand $T^{(k)}=\begin{bmatrix}T^{(k-1)}&t_1^{(k)}\\0&t_2^{(k)}\end{bmatrix}$,\\
$Z^{(k)}=\begin{bmatrix}Z^{(k-1)}&0\\\big(j_{\mathrm{fadi}}^{(k)}(t_1^{(k)})^\top\big)\otimes I_m&\big(j_{\mathrm{fadi}}^{(k)}(t_2^{(k)})^\top\big)\otimes I_m\end{bmatrix}$, and $C_{x}^{(k)}=\begin{bmatrix}C_{x}^{(k-1)}&l_{\mathrm{fadi}}^{(k)}j_{\mathrm{fadi}}^{(k)}\end{bmatrix}$.
                        
22. Update $B_{\perp} \gets B_{\perp} - E \begin{bmatrix}\mathrm{Re}(v_k)&\mathrm{Im}(v_k)\end{bmatrix}\big((j_{\mathrm{fadi}}^{(k)}(l_{\mathrm{fadi}}^{(k)})^\top)\otimes I_m\big)$, $F\gets EV_{\mathrm{fadi}}^{(k)}\Big(\big(J_{\mathrm{fadi}}^{(k)}\otimes I_m\big)-Z^{(k)}(Z^{(k)})^\top\Big)\big(L_{\mathrm{fadi}}^{(k)}\otimes I_m\big)^\top$, and $k\gets k+2$.

23. \textbf{If} $\alpha_k\in\mathbb{R}$, $\alpha_{k+1}\in\mathbb{R}$, $\beta_k\in\mathbb{C}$, and $\beta_{k+1}=\overline{\beta}_k$ \textbf{do}

24. Solve for $v_{k+1}$: $\big(A + \alpha_{k+1} E)v_{k+1}= (E v_k)$.

25. Set $j_{\mathrm{fadi}}^{(k)}$ and $l_{\mathrm{fadi}}^{(k)}$ as in \eqref{jcase4}, and $s_{\mathrm{adi}}^{(k)}$ and $l_{\mathrm{adi}}^{(k)}$ as in \eqref{sl_adi}.

26. Expand $V_{\mathrm{fadi}}^{(k)} = \begin{bmatrix}V_{\mathrm{fadi}}^{(k-1)}&v_k&v_{k+1}\end{bmatrix}$, $J_{\mathrm{fadi}}^{(k)} = \mathrm{blkdiag}\big(J_{\mathrm{fadi}}^{(k-1)},j_{\mathrm{fadi}}^{(k)}\big)$, $S_{\mathrm{adi}}^{(k)}=\begin{bmatrix}S_{\mathrm{adi}}^{(k-1)}&(L_{\mathrm{adi}}^{(k-1)})^\top l_{\mathrm{adi}}^{(k)}\\0&s_{\mathrm{adi}}^{(k)}\end{bmatrix}$, $L_{\mathrm{adi}}^{(k)}=\begin{bmatrix}L_{\mathrm{adi}}^{(k-1)}&l_{\mathrm{adi}}^{(k)}\end{bmatrix}$, and $L_{\mathrm{fadi}}^{(k)}=\begin{bmatrix}L_{\mathrm{fadi}}^{(k-1)}&l_{\mathrm{fadi}}^{(k)}\end{bmatrix}$.

27. Solve $\Big(-(S_{\mathrm{adi}}^{(k)})^\top + \overline{\beta_k} I\Big)\begin{bmatrix}z_1^{(k)}\\z_2^{(k)}\end{bmatrix} = \begin{bmatrix}(L_{\mathrm{adi}}^{(k-1)})^\top-T^{(k-1)} (C_{x}^{(k-1)})^\top\\(l_{\mathrm{adi}}^{(k)})^\top\end{bmatrix}$ for $\begin{bmatrix}z_1^{(k)}\\z_2^{(k)}\end{bmatrix}$, and set $\begin{bmatrix}t_1^{(k)}\\t_2^{(k)}\end{bmatrix}=\begin{bmatrix}\mathrm{Re}(z_1^{(k)})&\mathrm{Im}(z_1^{(k)})\\\mathrm{Re}(z_2^{(k)})&\mathrm{Im}(z_2^{(k)})\end{bmatrix}$.

28. Expand $T^{(k)}=\begin{bmatrix}T^{(k-1)}&t_1^{(k)}\\0&t_2^{(k)}\end{bmatrix}$, $Z^{(k)}=\begin{bmatrix}Z^{(k-1)}&0\\\big(j_{\mathrm{fadi}}^{(k)}(t_1^{(k)})^\top\big)\otimes I_m&\big(j_{\mathrm{fadi}}^{(k)}(t_2^{(k)})^\top\big)\otimes I_m\end{bmatrix}$, and $C_{x}^{(k)}=\begin{bmatrix}C_{x}^{(k-1)}&l_{\mathrm{fadi}}^{(k)}j_{\mathrm{fadi}}^{(k)}\end{bmatrix}$.

29. Update $B_{\perp} \gets B_{\perp} - E \begin{bmatrix}v_k&v_{k+1}\end{bmatrix}\big((j_{\mathrm{fadi}}^{(k)}(l_{\mathrm{fadi}}^{(k)})^\top)\otimes I_m\big)$, $F\gets EV_{\mathrm{fadi}}^{(k)}\Big(\big(J_{\mathrm{fadi}}^{(k)}\otimes I_m\big)-Z^{(k)}(Z^{(k)})^\top\Big)$ $\big(L_{\mathrm{fadi}}^{(k)}\otimes I_m\big)^\top$, and $k\gets k+2$.

\noindent\rule{\textwidth}{0.8pt}
\subsection{Extension to the frequency-limited Lyapunov equation\label{FLBT}}
Frequency-limited balanced truncation (FLBT) is a model order reduction (MOR) algorithm that produces a reduced-order model (ROM) approximating $G(s)$ with improved accuracy over a desired frequency interval $\Omega:=[\omega_1,\omega_2]\cup[-\omega_2,-\omega_1]$ rad/s \cite{gawronski1990model}. The frequency-limited controllability Gramian $P_\Omega$ solves the following Lyapunov equation:
\begin{align}
AP_\Omega E^\top+EP_\Omega A^\top+EF_\Omega(E,A)BB^\top+BB^\top F_\Omega(E,A)^\top E^\top&=0,
\end{align}
where 
\begin{align}
F_\Omega(E,A)=\frac{1}{2\pi}\int_{\Omega}(j\nu E-A)^{-1}d\nu.\label{mat_log_int}
\end{align}
The Gramian $P_\Omega$ can be expressed in integral form as
\begin{align}
P_\Omega&=\frac{1}{2\pi}\int_{\Omega}(j\nu E-A)^{-1}BB^\top(-j\nu E^\top-A^\top)^{-1}d\nu.\label{gram_int}
\end{align}
The basic principle of numerical integration is to replace the integrand by an interpolant at selected nodes and then integrate the interpolant instead of the original integrand. Since the ADI shifts in G-LRCF-ADI$^{(\text{R})}$ can lie on the imaginary axis, we can use G-LRCF-ADI$^{(\text{R})}$ to approximate the integrals \eqref{mat_log_int} and \eqref{gram_int}, thereby recursively approximating $F_\Omega(E,A)$ and $P_\Omega$, respectively.

Assume that all ADI shifts $\alpha_i=j\omega_i$ lie within the desired frequency interval $\Omega$. Then the integrand in \eqref{gram_int} can be replaced by the interpolant constructed by G-LRCF-ADI$^{(\text{R})}$, yielding the approximation
\begin{align}
P_\Omega\approx \tilde{P}_\Omega&= V_{\mathrm{fadi}}^{(k)}\Big(\frac{1}{2\pi}\int_{\Omega}(j\nu I-\hat{A}^{(k)})^{-1}\hat{B}^{(k)}(\hat{B}^{(k)})^\top(-j\nu I-(\hat{A}^{(k)})^\top)^{-1}d\nu\Big)(V_{\mathrm{fadi}}^{(k)})^\top\nonumber\\
&=V_{\mathrm{fadi}}^{(k)}\Big(F_\Omega(I,\hat{A}^{(k)})Z^{(k)}(Z^{(k)})^\top+Z^{(k)}(Z^{(k)})^\top(F_\Omega(I,\hat{A}^{(k)}))^\top\Big)(V_{\mathrm{fadi}}^{(k)})^\top\nonumber\\
&=V_{\mathrm{fadi}}^{(k)}\Big(\big(F_\Omega(I,\hat{a}^{(k)})J_{\mathrm{fadi}}^{(k)}(T^{(k)})^\top T^{(k)}(J_{\mathrm{fadi}}^{(k)})^\top\nonumber\\
&\hspace*{3cm}+J_{\mathrm{fadi}}^{(k)}(T^{(k)})^\top T^{(k)}(J_{\mathrm{fadi}}^{(k)})^\top(F_\Omega(I,\hat{a}^{(k)}))^\top\big)\otimes I_m\Big)(V_{\mathrm{fadi}}^{(k)})^\top.\nonumber
\end{align}
The matrix $F_\Omega(I,\hat{a}^{(k)})$ can be computed as
\[
F_\Omega(I,\hat{a}^{(k)})=\operatorname{Re}\big(F(\hat{a}^{(k)},\omega_2)-F(\hat{a}^{(k)},\omega_1)\big),
\]
where
\[
F(\hat{a}^{(k)},\omega)=\frac{j}{2\pi}\ln\big((j\omega I+\hat{a}^{(k)})(-j\omega I+\hat{a}^{(k)})^{-1}\big);
\]
cf. \cite{petersson2014model}.
Owing to the lower triangular structure of $\hat{a}^{(k)}=\begin{bmatrix}\hat{a}^{(k-1)}&0\\\hat{a}_{21}^{(k)}&\hat{a}_{22}^{(k)}\end{bmatrix}$, $F(\hat{a}^{(k)},\omega)$ also has a lower triangular structure, i.e.,
\[
F(\hat{a}^{(k)},\omega)=\begin{bmatrix}F(\hat{a}^{(k-1)},\omega)&0\\f_{21}^{(k)}&F(\hat{a}_{22}^{(k)},\omega)\end{bmatrix};
\]
cf. \cite{petersson2014model}.

Furthermore, due to the commutativity
\[
\hat{a}^{(k)}F(\hat{a}^{(k)},\omega)=F(\hat{a}^{(k)},\omega)\hat{a}^{(k)},
\]
the off-diagonal block of $F(\hat{a}^{(k)},\omega)$ can be computed by solving the following Sylvester equation:
\begin{align}
\hat{a}_{22}^{(k)}f_{21}^{(k)}-f_{21}^{(k)}\hat{a}^{(k-1)}+\hat{a}_{21}^{(k)}F(\hat{a}^{(k-1)},\omega)-F(\hat{a}_{22}^{(k)},\omega)\hat{a}_{21}^{(k)}=0;
\end{align}
cf. \cite{zulfiqar2022adaptive}. This Sylvester equation can be solved cheaply owing to the small sizes of the matrices $\hat{a}^{(k-1)}$ and $\hat{a}_{22}^{(k)}$. Furthermore, since $F(\hat{a}_{22}^{(k)},\omega)$ can be computed cheaply, $F_\Omega(I,\hat{a}^{(k)})$ can be computed recursively and efficiently. A MATLAB function named \texttt{compute\_F\_w} is provided in \cite{mycodes} to construct $F_\Omega(I,\hat{a}^{(k)})$ from the shifts $\alpha_i$ and $\beta_i$.

Consequently, G-LRCF-ADI$^{(\mathrm{R})}$ can be used to recursively approximate $P_\Omega$ according to the principle of numerical integration. Similarly, the integral \eqref{mat_log_int} can be approximated using the same principle, giving
\[
F_\Omega(E,A)B\approx V_{\mathrm{fadi}}^{(k)}F_\Omega(I,\hat{A}^{(k)})\hat{B}^{(k)}.
\]

Since $(\hat{A}^{(k)},\hat{B}^{(k)})$ is guaranteed to be controllable, its frequency-limited controllability Gramian
\[
\hat{P}_\Omega^{(k)}=F_\Omega(I,\hat{A}^{(k)})Z^{(k)}(Z^{(k)})^\top+Z^{(k)}(Z^{(k)})^\top(F_\Omega(I,\hat{A}^{(k)}))^\top
\]
is guaranteed to be positive definite \cite{gawronski1990model}. Thus, $\hat{P}_\Omega^{(k)}$ can be factorized as $\hat{P}_\Omega^{(k)}=Z_\Omega^{(k)} (Z_\Omega^{(k)})^\top$, which is useful for FLBT because the square-root algorithm \cite{tombs1987truncated} requires the Gramians in this form.

By exploiting the low-rank structure of $\tilde{P}_\Omega$ and the small number of columns of $B$, the normalized residual
\begin{align}
\frac{\|A\tilde{P}_\Omega E^\top+E\tilde{P}_\Omega A^\top+EF_\Omega(E,A)BB^\top+BB^\top F_\Omega(E,A)^\top E^\top\|_F}{\|EF_\Omega(E,A)BB^\top+BB^\top F_\Omega(E,A)^\top E^\top\|_F}\label{fl_res}
\end{align}
can be evaluated efficiently without forming dense $n \times n$ matrices.

The residual can be factored as
\begin{align}
  &A\tilde{P}_\Omega E^\top+E\tilde{P}_\Omega A^\top+EF_\Omega(E,A)BB^\top+BB^\top F_\Omega(E,A)^\top E^\top\nonumber\\
  &=\begin{bmatrix}AV_{\mathrm{fadi}}^{(k)}\hat{P}_\Omega^{(k)}&EV_{\mathrm{fadi}}^{(k)}&EF_\Omega(E,A)B&B\end{bmatrix}
  \begin{bmatrix}EV_{\mathrm{fadi}}^{(k)}&AV_{\mathrm{fadi}}^{(k)}\hat{P}_\Omega^{(k)}&B&EF_\Omega(E,A)B\end{bmatrix}^\top.\label{fl_n_res}
\end{align}
Using the identity $\|R\|_F^2 = \operatorname{trace}(R^\top R)$ and the cyclic permutation property of the trace, the normalized residual can be computed efficiently from the factorization \eqref{fl_n_res}. However, this computation requires explicitly forming the product $F_\Omega(E,A)B$, which is not feasible in large-scale settings. A possible stopping criterion is based on stagnation of the singular values of $(V_{\mathrm{fadi}}^{(k)})^\top E V_{\mathrm{fadi}}^{(k)}$, which indicates that the new shifts are not adding meaningful information to the interpolation basis.
\subsection{Extension to the time-limited Lyapunov equation\label{TLBT}}
Time-limited balanced truncation (TLBT) is a MOR algorithm that produces a ROM approximating $G(s)$ with improved accuracy over a desired time interval $[0,\tau]$ s \cite{gawronski1990model}. The time-limited controllability Gramian $P_\tau$ solves the following Lyapunov equation:
\begin{align}
AP_\tau E^\top + E P_\tau A^\top + BB^\top  -e^{AE^{-1}\tau} BB^\top  e^{E^{-\top}A^\top\tau}=0.
\end{align}
Since G-LRCF-ADI$^{(\text{R})}$ is an interpolatory projection algorithm, the matrix exponential product $e^{E^{-1}A\tau}E^{-1} B$ can be approximated via projection as
\begin{align}
e^{E^{-1}A\tau} E^{-1}B\approx V_{\mathrm{fadi}}^{(k)} e^{\hat{A}^{(k)}\tau} \hat{B}^{(k)}=V_{\mathrm{fadi}}^{(k)} \big(e^{\hat{a}^{(k)}\tau}\otimes I_m\big) \hat{B}^{(k)}.
\end{align}
Owing to the lower triangular structure of $\hat{a}^{(k)}=\begin{bmatrix}\hat{a}^{(k-1)}&0\\\hat{a}_{21}^{(k)}&\hat{a}_{22}^{(k)}\end{bmatrix}$, the matrix exponential $e^{\hat{a}^{(k)}\tau}$ also has a lower triangular structure, i.e.,
\[
e^{\hat{a}^{(k)}\tau}=\begin{bmatrix}e^{\hat{a}^{(k-1)}\tau}&0\\e_{21}^{(k)}&e^{\hat{a}_{22}^{(k)}\tau}\end{bmatrix}.
\]
Furthermore, due to the commutativity
\[
\hat{a}^{(k)}e^{\hat{a}^{(k)}\tau}=e^{\hat{a}^{(k)}\tau}\hat{a}^{(k)},
\]
the off-diagonal block of $e^{\hat{a}^{(k)}\tau}$ can be computed by solving the following Sylvester equation:
\begin{align}
\hat{a}_{22}^{(k)}e_{21}^{(k)}-e_{21}^{(k)}\hat{a}^{(k-1)}+\hat{a}_{21}^{(k)}e^{\hat{a}^{(k-1)}\tau}-e^{\hat{a}_{22}^{(k)}\tau}\hat{a}_{21}^{(k)}=0.
\end{align}
This Sylvester equation can be solved cheaply owing to the small sizes of the matrices $\hat{a}^{(k-1)}$ and $\hat{a}_{22}^{(k)}$. Furthermore, since $e^{\hat{a}_{22}^{(k)}\tau}$ can be computed cheaply, $e^{\hat{a}^{(k)}\tau}$ can be computed recursively and efficiently. A MATLAB function named \texttt{compute\_exp\_A} is provided in \cite{mycodes} to construct $e^{\hat{a}^{(k)}\tau}$ from the shifts $\alpha_i$ and $\beta_i$.

The matrix $P_\tau$ can be expressed in integral form as
\begin{align}
P_\tau=\int_{0}^{\tau}e^{E^{-1}At}E^{-1}BB^\top E^{-\top} e^{A^\top E^{-\top}t}dt.
\end{align}
Replacing the matrix exponential product $e^{E^{-1}At}E^{-1} B$ with its G-LRCF-ADI$^{(\text{R})}$-based approximation yields
\begin{align}
P_\tau&\approx \tilde{P}_\tau=V_{\mathrm{fadi}}^{(k)}\Big(\int_{0}^{\tau}\big(e^{\hat{a}^{(k)}t}\otimes I_m\big) \hat{B}^{(k)}(\hat{B}^{(k)})^\top \big(e^{\hat{a}^{(k)}t}\otimes I_m\big)^\top dt\Big)(V_{\mathrm{fadi}}^{(k)})^\top\nonumber\\
&=V_{\mathrm{fadi}}^{(k)}\Big(Z^{(k)}(Z^{(k)})^\top-\big(e^{\hat{a}^{(k)}\tau}\otimes I_m\big) Z^{(k)}(Z^{(k)})^\top \big(e^{\hat{a}^{(k)}\tau}\otimes I_m\big)^\top\Big)(V_{\mathrm{fadi}}^{(k)})^\top.\nonumber
\end{align}
Consequently, G-LRCF-ADI$^{(\mathrm{R})}$ can be used to recursively approximate $P_\tau$. Furthermore, since $(\hat{A}^{(k)},\hat{B}^{(k)})$ is guaranteed to be controllable, its time-limited controllability Gramian
\[
\hat{P}_\tau^{(k)}=Z^{(k)}(Z^{(k)})^\top-e^{\hat{A}^{(k)}\tau}Z^{(k)}(Z^{(k)})^\top e^{(\hat{A}^{(k)})^\top\tau}
\]
is guaranteed to be positive definite. Thus, $\hat{P}_\tau^{(k)}$ can be factorized as $\hat{P}_\tau^{(k)}=Z_\tau^{(k)} (Z_\tau^{(k)})^\top$, which is useful for TLBT because the square-root algorithm \cite{tombs1987truncated} requires the Gramians in this form.

By exploiting the low-rank structure of $\tilde{P}_\tau$ and the small number of columns of $B$, the normalized residual
\begin{align}
\frac{\|A\tilde{P}_\tau E^\top + E \tilde{P}_\tau A^\top + BB^\top  -e^{AE^{-1}\tau} BB^\top  e^{E^{-\top}A^\top\tau}\|_F}{\|BB^\top  -e^{AE^{-1}\tau} BB^\top  e^{E^{-\top}A^\top\tau}\|_F}\label{tl_res}
\end{align}
can be evaluated efficiently without forming dense $n \times n$ matrices.

The residual can be factored as
\begin{align}
  &A\tilde{P}_\tau E^\top + E \tilde{P}_\tau A^\top + BB^\top  -e^{AE^{-1}\tau} BB^\top  e^{E^{-\top}A^\top\tau}\nonumber\\
  &=\begin{bmatrix}AV_{\mathrm{fadi}}^{(k)}\hat{P}_\tau^{(k)}&EV_{\mathrm{fadi}}^{(k)}&B&-e^{AE^{-1}\tau} B\end{bmatrix}
  \begin{bmatrix}EV_{\mathrm{fadi}}^{(k)}&AV_{\mathrm{fadi}}^{(k)}\hat{P}_\tau^{(k)}&B&e^{AE^{-1}\tau} B\end{bmatrix}^\top.\label{tl_n_res}
\end{align}

The computation of the residual for this approximation requires explicitly computing the matrix exponential product $e^{AE^{-1}\tau} B$, which is not feasible in large-scale settings. A possible stopping criterion is based on stagnation of the singular values of $(V_{\mathrm{fadi}}^{(k)})^\top E V_{\mathrm{fadi}}^{(k)}$, which indicates that the new shifts are not adding meaningful information to the interpolation basis.
\subsection{Extension to the Riccati equation\label{LQGBT}}
G-LRCF-ADI$^{(\mathrm{R})}$ can also be used to compute a low-rank solution of a Riccati equation. Consider the following Riccati equation:
\begin{align}
AP_{\mathrm{ricc}}E^\top+EP_{\mathrm{ricc}}A^\top +BB^\top -EP_{\mathrm{ricc}}C^\top CP_{\mathrm{ricc}}E^\top&=0.
\end{align}
Let $\hat{P}_{\mathrm{ricc}}^{(k)}$ solve the following projected Riccati equation:
\begin{align}
\hat{A}^{(k)}\hat{P}_{\mathrm{ricc}}^{(k)}+\hat{P}_{\mathrm{ricc}}^{(k)}(\hat{A}^{(k)})^\top +\hat{B}^{(k)}(\hat{B}^{(k)})^\top -\hat{P}_{\mathrm{ricc}}^{(k)}(V_{\mathrm{fadi}}^{(k)})^\top C^\top CV_{\mathrm{fadi}}^{(k)}\hat{P}_{\mathrm{ricc}}^{(k)}&=0.\label{proj_ricc}
\end{align}
Since the projected pair $(\hat{A}^{(k)},\hat{B}^{(k)})$ is guaranteed to be stable and controllable in G-LRCF-ADI$^{(\text{R})}$, there exists a positive-definite stabilizing solution to the projected Riccati equation \eqref{proj_ricc} such that the matrix
\[
\hat{A}^{(k)}-\hat{P}_{\mathrm{ricc}}^{(k)}(V_{\mathrm{fadi}}^{(k)})^\top C^\top CV_{\mathrm{fadi}}^{(k)}
\]
is Hurwitz.

Due to the quadratic term $\hat{P}_{\mathrm{ricc}}^{(k)}(V_{\mathrm{fadi}}^{(k)})^\top C^\top CV_{\mathrm{fadi}}^{(k)}\hat{P}_{\mathrm{ricc}}^{(k)}$, the stabilizing solution to the projected Riccati equation \eqref{proj_ricc} cannot be updated recursively unless $\alpha_i=\overline{\beta}_i$, in which case the approximation $V_{\mathrm{fadi}}^{(k)}\hat{P}_{\mathrm{ricc}}^{(k)}(V_{\mathrm{fadi}}^{(k)})^\top$ is identical to that produced by the ADI method for Riccati equations (RADI) \cite{benner2018radi}; see \cite{zulfiqar2026unified} for details. When $\alpha_i\neq\overline{\beta}_i$, the projected Riccati equation \eqref{proj_ricc} must be solved afresh in each iteration, which is a disadvantage compared with RADI \cite{benner2018radi}. Thus, in the case of Riccati equations, the flexibility in shift selection and prescribed poles comes with a trade-off.

The next proposition shows that the residual can be computed as in the Lyapunov equation case.
\begin{proposition}
Define $F_{\mathrm{ricc}}^{(k)}$ by
\[
F_{\mathrm{ricc}}^{(k)}=EV_{\mathrm{fadi}}^{(k)}\big(\hat{B}^{(k)}-\hat{P}_{\mathrm{ricc}}^{(k)}(L_{\mathrm{fadi}}^{(k)})^\top\big).
\]
Then the residual $R_{\mathrm{ricc}}^{(k)}$ for the approximation $P_{\mathrm{ricc}}\approx \tilde{P}_{\mathrm{ricc}}^{(k)}= V_{\mathrm{fadi}}^{(k)}\hat{P}_{\mathrm{ricc}}^{(k)}(V_{\mathrm{fadi}}^{(k)})^\top$, defined by
\begin{align}
R_{\mathrm{ricc}}^{(k)}=A\tilde{P}_{\mathrm{ricc}}^{(k)}E^\top+E\tilde{P}_{\mathrm{ricc}}^{(k)}A^\top +BB^\top -E\tilde{P}_{\mathrm{ricc}}^{(k)}C^\top C\tilde{P}_{\mathrm{ricc}}^{(k)}E^\top
\end{align}
is given by
\[
R_{\mathrm{ricc}}^{(k)}=B_{\perp,k}^{\mathrm{fadi}}(B_{\perp,k}^{\mathrm{fadi}})^\top+B_{\perp,k}^{\mathrm{fadi}}(F_{\mathrm{ricc}}^{(k)})^\top+F_{\mathrm{ricc}}^{(k)}(B_{\perp,k}^{\mathrm{fadi}})^\top.
\]
\end{proposition}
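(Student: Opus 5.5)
Writing $V=V_{\mathrm{fadi}}^{(k)}$, $\hat P=\hat{P}_{\mathrm{ricc}}^{(k)}$, $L=L_{\mathrm{fadi}}^{(k)}\otimes I_m$ and $B_\perp=B_{\perp,k}^{\mathrm{fadi}}$, the approach is to eliminate $AV$ from the residual using the Sylvester equation \eqref{proj_sylv}, and then use the projected Riccati equation \eqref{proj_ricc} to remove every term that is not $n\times n$ and rank-structured through $B_\perp$. This is the same route the residual formula \eqref{res_eq} follows in the Lyapunov case, cf. \cite{wolf2014h}.

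First, from \eqref{proj_sylv} we have $AV=EV\hat{A}^{(k)}-B_\perp L$. Substituting this into $AV\hat PV^\top E^\top$ and its transpose gives
\[
A\tilde{P}_{\mathrm{ricc}}^{(k)}E^\top+E\tilde{P}_{\mathrm{ricc}}^{(k)}A^\top
=EV\big(\hat{A}^{(k)}\hat P+\hat P(\hat{A}^{(k)})^\top\big)V^\top E^\top-B_\perp L\hat PV^\top E^\top-EV\hat PL^\top B_\perp^\top .
\]
The quadratic term is exactly $EV\hat P V^\top C^\top CV\hat P V^\top E^\top$. Adding it and using \eqref{proj_ricc}, the bracket together with the quadratic term collapses to $-EV\hat{B}^{(k)}(\hat{B}^{(k)})^\top V^\top E^\top$. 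Hence
\[
R_{\mathrm{ricc}}^{(k)}=BB^\top-EV\hat{B}^{(k)}(\hat{B}^{(k)})^\top V^\top E^\top-B_\perp L\hat PV^\top E^\top-EV\hat PL^\top B_\perp^\top .
\]

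The remaining step, which I expect to be the crux, is to express $B$ in terms of $B_\perp$ and $EV\hat{B}^{(k)}$. I would aim to show $B=B_\perp+EV\hat{B}^{(k)}$. Column-block by column-block, the recursion $B_{\perp,i}^{\mathrm{fadi}}=B_{\perp,i-1}^{\mathrm{fadi}}-(\alpha_i+\beta_i)Ev_i^{\mathrm{fadi}}$ telescopes to $B=B_\perp+\sum_i(\alpha_i+\beta_i)Ev_i^{\mathrm{fadi}}$. Since $\hat{B}^{(k)}$ stacks the blocks $-j_{\mathrm{fadi}}^{(i)}I_m=(\alpha_i+\beta_i)I_m$, this sum is exactly $EV\hat{B}^{(k)}$.

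One point needs care: the sign convention in $\hat{B}^{(k)}=(J_{\mathrm{fadi}}^{(k)}(L_{\mathrm{fadi}}^{(k)})^*)\otimes I_m$ with $l_{\mathrm{fadi}}=-1$. Both representations give $(\alpha_i+\beta_i)$, so they are consistent. For the realified case, the identical telescoping follows from the update $B_\perp\gets B_\perp-EV_{\mathrm{block}}(j^{(k)}(l^{(k)})^\top\otimes I_m)$ together with the definition of $\hat b^{(k)}$.

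Finally, substitute $B=B_\perp+EV\hat{B}^{(k)}$ and expand:
\[
BB^\top=B_\perp B_\perp^\top+B_\perp(\hat{B}^{(k)})^\top V^\top E^\top+EV\hat{B}^{(k)}B_\perp^\top+EV\hat{B}^{(k)}(\hat{B}^{(k)})^\top V^\top E^\top .
\]
The last term cancels the term $-EV\hat{B}^{(k)}(\hat{B}^{(k)})^\top V^\top E^\top$ above. The cross terms combine with $-B_\perp L\hat PV^\top E^\top$ and its transpose. Because $\hat P$ is symmetric, these give
\[
B_\perp\big(\hat{B}^{(k)}-\hat PL^\top\big)^\top V^\top E^\top=B_\perp(F_{\mathrm{ricc}}^{(k)})^\top
\]
and its transpose $F_{\mathrm{ricc}}^{(k)}B_\perp^\top$. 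This yields the claimed formula.
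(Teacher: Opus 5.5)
Your proof is correct and is essentially the paper's argument in Appendix A: both rest on the same three identities, applied in a different order (you use the Sylvester equation first rather than last). The identities are $B=B_{\perp,k}^{\mathrm{fadi}}+EV_{\mathrm{fadi}}^{(k)}\hat{B}^{(k)}$, the projected Riccati equation \eqref{proj_ricc}, and the Sylvester equation \eqref{proj_sylv}. Your telescoping justification of $B=B_{\perp,k}^{\mathrm{fadi}}+EV_{\mathrm{fadi}}^{(k)}\hat{B}^{(k)}$, which the paper only ``recalls,'' is a useful addition; one small wording slip is that the quadratic term is subtracted rather than added, though your resulting expression is right.
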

\begin{proof}
The proof is given in Appendix A.
\end{proof}
\section{Applications to data-driven modeling}
It is shown in \cite{zulfiqar2026unified} that fADI \cite{tu2009adi,benner2014computing} is essentially a rational interpolation algorithm. However, a direct connection between rational Krylov subspaces and fADI has not been established. In this section, we establish a direct analytical relation between Krylov-subspace-based rational interpolation and fADI, which is useful for data-driven reduced-order modeling. Using this relation, we show that G-LRCF-ADI can be used as a system identification tool, since ROMs with important system properties can be constructed from measured frequency-domain data without any knowledge of the state-space realization or transfer function of the original model.

Define $V_{\mathrm{kry}}^{(k)}$ by
\[
V_{\mathrm{kry}}^{(k)}=\begin{bmatrix}(-\alpha_1E-A)^{-1}B&\cdots&(-\alpha_kE-A)^{-1}B\end{bmatrix}.
\]
The next proposition establishes the relation between $V_{\mathrm{fadi}}^{(k)}$ and $V_{\mathrm{kry}}^{(k)}$.
\begin{proposition}
Let the ADI shifts $\alpha_i$ be distinct for $i=1,\ldots,k$, and let $\beta_i$ be the prescribed poles of $\hat{G}^{(k)}(s)$. Define the upper triangular matrix $T_v^{(k)}$ by
\begin{equation}
\label{eq:T_closed_form}
    T_v^{(k)}(j,i) = 
    \begin{cases} 
    (-1)^{i} \dfrac{\prod_{q=1}^{i-1}(\alpha_j + \beta_q)}{\prod_{\substack{q=1 \\ q \neq j}}^{i}(\alpha_q - \alpha_j)}, & 1 \le j \le i, \\[10pt]
    0, & j > i.
    \end{cases}
\end{equation}
Then the fADI matrix $V_{\mathrm{fadi}}^{(k)}$ is related to $V_{\mathrm{kry}}^{(k)}$ by
\[
V_{\mathrm{fadi}}^{(k)}=V_{\mathrm{kry}}^{(k)}(T_v^{(k)}\otimes I_m).
\]
\end{proposition}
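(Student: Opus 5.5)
The plan is to get a closed form for each block column $v_i^{\mathrm{fadi}}$ as a rational function of $AE^{-1}$ applied to $B$, and then expand that rational function in partial fractions. The coefficients of the expansion should be exactly the entries of $T_v^{(k)}$. The argument has three steps.

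\emph{Step 1: closed form for the residual factor.} From the fADI recursion,
\[
B_{\perp,i}^{\mathrm{fadi}}=\big(A+\alpha_iE-(\alpha_i+\beta_i)E\big)(A+\alpha_iE)^{-1}B_{\perp,i-1}^{\mathrm{fadi}}=(A-\beta_iE)(A+\alpha_iE)^{-1}B_{\perp,i-1}^{\mathrm{fadi}}.
\]
Set $M=AE^{-1}$. Then $A-\beta E=(M-\beta I)E$ and $(A+\alpha E)^{-1}=E^{-1}(M+\alpha I)^{-1}$. Hence $(A-\beta_iE)(A+\alpha_iE)^{-1}=(M-\beta_iI)(M+\alpha_iI)^{-1}$, a rational function of $M$. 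All such factors commute with one another. By induction on $i$, this gives
\[
B_{\perp,i}^{\mathrm{fadi}}=\prod_{q=1}^{i}(M-\beta_qI)(M+\alpha_qI)^{-1}B.
\]
Substituting into the definition of $v_i^{\mathrm{fadi}}$ yields $v_i^{\mathrm{fadi}}=E^{-1}r_i(M)B$, where
\[
r_i(z)=\frac{\prod_{q=1}^{i-1}(z-\beta_q)}{\prod_{q=1}^{i}(z+\alpha_q)}.
\]
Each $A+\alpha_qE$ is assumed invertible, since the fADI solves require it.

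\emph{Step 2: partial fractions.} The $\alpha_q$ are distinct, and the numerator of $r_i$ has degree $i-1$, which is less than the denominator degree $i$. So $r_i$ has no polynomial part and
\[
r_i(z)=\sum_{j=1}^{i}\frac{c_{ji}}{z+\alpha_j},\qquad c_{ji}=\frac{\prod_{q=1}^{i-1}(-\alpha_j-\beta_q)}{\prod_{q=1,\,q\neq j}^{i}(\alpha_q-\alpha_j)}=(-1)^{i-1}\frac{\prod_{q=1}^{i-1}(\alpha_j+\beta_q)}{\prod_{q\neq j}(\alpha_q-\alpha_j)}.
\]
This identity holds for scalars and therefore also as a matrix-function identity in $M$: multiply through by $\prod_q(M+\alpha_qI)$, which reduces it to a polynomial identity.

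\emph{Step 3: assembling the result.} Each term satisfies
\[
E^{-1}(M+\alpha_jI)^{-1}B=(A+\alpha_jE)^{-1}B=-(-\alpha_jE-A)^{-1}B,
\]
which is minus the $j$th block column of $V_{\mathrm{kry}}^{(k)}$. The extra minus sign turns $(-1)^{i-1}$ into $(-1)^i$, which reproduces exactly the entry $T_v^{(k)}(j,i)$ in \eqref{eq:T_closed_form} for $j\le i$. Terms with $j>i$ do not appear, which gives the upper triangular structure. Because each block column consists of $m$ columns, the same scalar coefficient multiplies all $m$ columns of the $j$th Krylov block. This is the factor $\otimes I_m$, and together the three steps give $V_{\mathrm{fadi}}^{(k)}=V_{\mathrm{kry}}^{(k)}(T_v^{(k)}\otimes I_m)$.

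The main obstacle is Step 1: rewriting the fADI recursion so that all factors commute, which is why the argument passes to $M=AE^{-1}$. After that, Steps 2 and 3 are bookkeeping. The one thing to check there is the sign: $(-1)^{i-1}$ comes from $-\alpha_j-\beta_q=-(\alpha_j+\beta_q)$ over $i-1$ factors, and the final $(-1)^i$ from the sign in the definition of $V_{\mathrm{kry}}^{(k)}$.
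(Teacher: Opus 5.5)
Your proof is correct, and it reaches the result by a different route than the paper.

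\textbf{The paper's route.} The paper never writes $v_i^{\mathrm{fadi}}$ in closed form. It posits $v_i^{\mathrm{fadi}}=\sum_{j\le i}T_v^{(k)}(j,i)R_jB$ with $R_j=(-\alpha_jE-A)^{-1}$ and substitutes this into $B_{\perp,i}^{\mathrm{fadi}}=B-\sum_{q\le i}(\alpha_q+\beta_q)Ev_q^{\mathrm{fadi}}$. It then uses the resolvent identity $R_{i+1}ER_j=(R_{i+1}-R_j)/(\alpha_{i+1}-\alpha_j)$ to get the column recurrence $T_v^{(k)}(j,i+1)=(\alpha_j-\alpha_{i+1})^{-1}\sum_{q=j}^{i}(\alpha_q+\beta_q)T_v^{(k)}(j,q)$. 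Finally it checks the closed form against this recurrence by induction.

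\textbf{Your route.} You collapse the recursion into a product of commuting factors $(M-\beta_qI)(M+\alpha_qI)^{-1}$, so that $v_i^{\mathrm{fadi}}=E^{-1}r_i(M)B$. You then read off each entry of $T_v^{(k)}$ as a residue of $r_i$. This is exactly the partial-fraction/divided-difference fact that the paper only invokes in passing.

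\textbf{What your route buys.} It gives a complete verification in one step, which the paper's proof does not.
\begin{itemize}
\item The paper's induction rests on a summation identity that it asserts without proof.
\item As printed, that identity has $(-1)^{i+1}$ where $(-1)^{i}$ is correct. At $i=j=1$ the left side is $-(\alpha_1+\beta_1)$, while the printed right side is $\alpha_1+\beta_1$. The slip is silently compensated in the next line, when dividing by $\alpha_j-\alpha_{i+1}$ is written with denominator $\prod_{l\ne j}^{i+1}(\alpha_l-\alpha_j)$.
\item The paper's recurrence covers only the off-diagonal entries, so the diagonal entries $T_v^{(k)}(i,i)$ for $i\ge 2$ are never checked there. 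Your residue at $j=i$ handles them automatically.
\end{itemize}
You also get the residual factor $B_{\perp,i}^{\mathrm{fadi}}=\prod_{q\le i}(A-\beta_qE)(A+\alpha_qE)^{-1}B$ as a by-product.

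\textbf{What the paper's route buys.} It produces a recurrence that builds $T_v^{(k)}$ column by column alongside the iteration. It also uses only the resolvent identity of the pencil, so it would go through without $E^{-1}$. Your passage to $M=AE^{-1}$ relies on the standing assumption that $E$ is invertible, which is harmless here.
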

\begin{proof}
The proof is given in Appendix B.
\end{proof}
A MATLAB function named \texttt{construct\_T} is provided in \cite{mycodes} to construct $T_v^{(k)}$ from the shifts $\alpha_i$ and $\beta_i$.

In \cite{zulfiqar2026data,zulfiqar2026proj}, various data-driven modeling algorithms were presented based on the pole-placement property of fADI. However, the exact relation between $V_{\mathrm{kry}}^{(k)}$ and $V_{\mathrm{fadi}}^{(k)}$ was not recognized. Consequently, $(T_v^{(k)})^{-1}$ implicitly appeared in all the algorithms presented in \cite{zulfiqar2026data,zulfiqar2026proj}. As the number of data samples increased, the dimension of $(T_v^{(k)})^{-1}$ implicitly present in the state-space realizations of \cite{zulfiqar2026data,zulfiqar2026proj} grew, causing numerical issues. In most numerical algorithms, it is a rule of thumb to avoid explicit inverses as much as possible for numerical stability. Now that an analytical expression for $T_v^{(k)}$ and the exact relation between $V_{\mathrm{kry}}^{(k)}$ and $V_{\mathrm{fadi}}^{(k)}$ have been identified, we revisit the data-driven models presented in \cite{zulfiqar2026data,zulfiqar2026proj} and replace the matrix $CV_{\mathrm{kry}}^{(k)}$ with $CV_{\mathrm{fadi}}^{(k)}$ to avoid the implicit appearance of $(T_v^{(k)})^{-1}$. The state-space realization presented in \cite{zulfiqar2026data,zulfiqar2026proj} can be viewed as being obtained by applying the similarity transformation
\[
\hat{A}^{(k)}\gets \big(T_v^{(k)}\otimes I_m\big)\hat{A}^{(k)}\big((T_v^{(k)})^{-1}\otimes I_m\big),\quad \hat{B}^{(k)}\gets\big(T_v^{(k)}\otimes I_m\big)\hat{B}^{(k)},\quad \hat{C}^{(k)}\gets CV_{\mathrm{fadi}}^{(k)}\big((T_v^{(k)})^{-1}\otimes I_m\big),
\]
which was numerically unstable due to the implicit presence of $(T_v^{(k)})^{-1}$, not recognized in \cite{zulfiqar2026data,zulfiqar2026proj}.

Dually, define $W_{\mathrm{kry}}^{(k)}$ by
\[
W_{\mathrm{kry}}^{(k)}=\begin{bmatrix}(-\overline{\mu}_1E^\top-A^\top)^{-1}C^\top&\cdots&(-\overline{\mu}_kE^\top-A^\top)^{-1}C^\top\end{bmatrix}.
\]
Next, construct $T_w^{(k)}$ from the distinct ADI shifts $\mu_i$ and the prescribed poles $\nu_i$ by
\begin{equation}
    T_w^{(k)}(j,i) = 
    \begin{cases} 
    (-1)^{i} \dfrac{\prod_{q=1}^{i-1}(\overline{\mu}_j + \overline{\nu}_q)}{\prod_{\substack{q=1 \\ q \neq j}}^{i}(\overline{\mu}_q - \overline{\mu}_j)}, & 1 \le j \le i, \\[10pt]
    0, & j > i.
    \end{cases}
\end{equation}
Then the matrices $Z_{\mathrm{fadi}}^{(k)}$ and $W_{\mathrm{kry}}^{(k)}$ are related by
\[
Z_{\mathrm{fadi}}^{(k)}=W_{\mathrm{kry}}^{(k)}(T_w^{(k)}\otimes I_p).
\]
\subsection{Alternative to the Loewner framework \cite{mayo2007framework}}
Write $G(s)$ as
\[
G(s)=H(s)+D=H(s)+G(\infty),
\]
since $D=\lim_{s\rightarrow\infty}G(s)$.

In the Loewner framework \cite{mayo2007framework}, the matrices $\bar{E}^{(k)}$, $\bar{A}^{(k)}$, $\bar{B}^{(k)}$, $\bar{C}^{(k)}$, and $\bar{D}$ are constructed from the samples $H(-\alpha_i)$, $H(-\mu_i)$, and $G(\infty)$:
\begin{align}
\bar{E}^{(k)}(j,i)&=(W_{\mathrm{kry}}^{(k)}(:,i))^*EV_{\mathrm{kry}}^{(k)}(:,j)=\frac{H(-\mu_i)-H(-\alpha_j)}{\mu_i-\alpha_j},\nonumber\\
\bar{A}^{(k)}(j,i)&=(W_{\mathrm{kry}}^{(k)}(:,i))^*AV_{\mathrm{kry}}^{(k)}(:,j)=\frac{\alpha_jH(-\alpha_j)-\mu_iH(-\mu_i)}{\mu_i-\alpha_j},\nonumber\\
\bar{B}^{(k)}(j,:)&=(W_{\mathrm{kry}}^{(k)}(:,j))^*B=H(-\mu_j),\nonumber\\
\bar{C}^{(k)}(:,i)&=CV_{\mathrm{kry}}^{(k)}(:,i)=H(-\alpha_i),\nonumber\\
\bar{D}&=G(\infty).\nonumber
\end{align}
This realization satisfies the following interpolation conditions:
\[
G(-\alpha_i)=\bar{C}^{(k)}(-\alpha_iI-\bar{A}^{(k)})^{-1}\bar{B}^{(k)}+\bar{D}\quad \text{and}\quad G(-\mu_j)=\bar{C}^{(k)}(-\mu_jI-\bar{A}^{(k)})^{-1}\bar{B}^{(k)}+\bar{D}.
\]

Define the following state-space matrices, constructed from the samples of $H(-\alpha_i)$, $H(-\mu_i)$, and $G(\infty)$:
\begin{align}
\hat{E}_{\mathrm{fadi}}^{(k)}&=(Z_{\mathrm{fadi}}^{(k)})^*EV_{\mathrm{fadi}}^{(k)}=(T_w^{(k)}\otimes I_p)^*\bar{E}^{(k)}(T_v^{(k)}\otimes I_m),\nonumber\\
\hat{A}_{\mathrm{fadi}}^{(k)}&=(Z_{\mathrm{fadi}}^{(k)})^*AV_{\mathrm{fadi}}^{(k)}=(T_w^{(k)}\otimes I_p)^*\bar{A}^{(k)}(T_v^{(k)}\otimes I_m),\nonumber\\
\hat{B}_{\mathrm{fadi}}^{(k)}&=(Z_{\mathrm{fadi}}^{(k)})^*B=(T_w^{(k)}\otimes I_p)^*\bar{B}^{(k)},\nonumber\\
\hat{C}_{\mathrm{fadi}}^{(k)}&=CV_{\mathrm{fadi}}^{(k)}=\bar{C}^{(k)}(T_v^{(k)}\otimes I_m),\nonumber\\
\hat{D}_{\mathrm{fadi}}&=G(\infty).\nonumber
\end{align}
Note that as the number of samples $k$ increases, the computation of $T_v^{(k)}$ and $T_w^{(k)}$ becomes numerically unstable, since the products that appear in their analytical formulas may cause overflow. Therefore, we do not recommend first constructing $(\bar{A}^{(k)},\bar{B}^{(k)},\bar{C}^{(k)},\bar{D},\bar{E}^{(k)})$, then constructing $T_v^{(k)}$ and $T_w^{(k)}$, and then multiplying to obtain $(\hat{A}_{\mathrm{fadi}}^{(k)},\hat{B}_{\mathrm{fadi}}^{(k)},\hat{C}_{\mathrm{fadi}}^{(k)},\hat{D}_{\mathrm{fadi}},\hat{E}_{\mathrm{fadi}}^{(k)})$. Instead, $(\hat{A}_{\mathrm{fadi}}^{(k)},\hat{B}_{\mathrm{fadi}}^{(k)},\hat{C}_{\mathrm{fadi}}^{(k)},\hat{D}_{\mathrm{fadi}},\hat{E}_{\mathrm{fadi}}^{(k)})$ should be constructed recursively to inherit the numerical stability of the G-LRCF-ADI algorithm. We refer to the recursive construction of $(\hat{A}_{\mathrm{fadi}}^{(k)},\hat{B}_{\mathrm{fadi}}^{(k)},\hat{C}_{\mathrm{fadi}}^{(k)},\hat{D}_{\mathrm{fadi}},\hat{E}_{\mathrm{fadi}}^{(k)})$ directly from the samples $H(-\alpha_i)$, $H(-\mu_i)$, and $G(\infty)$ as the ``\textit{Data-driven ADI framework}''. The pseudocode for the Data-driven ADI framework is given in Algorithm \ref{alg_adi_fram}. A MATLAB function named \texttt{compute\_adi\_rom} that implements the Data-driven ADI framework is available in \cite{mycodes}. Although the interpolant constructed using Algorithm \ref{alg_adi_fram} is merely a different state-space realization of the same interpolant constructed by the Loewner framework \cite{mayo2007framework}, it has several interesting and useful properties and applications, which are explored in the sequel.
\begin{algorithm}[!h]
\SetAlgoLined
\caption{Data-driven ADI framework}
\label{alg_adi_fram}

\KwIn{Samples $H(-\alpha_i)$ and $H(-\mu_i)$ for $i=1,\dots,k$; shifts $\alpha_i, \beta_i, \mu_i, \nu_i$ for $i=1,\dots,k$.}
\KwOut{Matrices of the Interpolant: $\hat{E}_{\mathrm{fadi}}^{(k)}, \hat{A}_{\mathrm{fadi}}^{(k)} \in \mathbb{C}^{pk \times mk}$, $\hat{B}_{\mathrm{fadi}}^{(k)} \in \mathbb{C}^{pk \times m}$, $\hat{C}_{\mathrm{fadi}}^{(k)} \in \mathbb{C}^{p \times mk}$.}

Initialize $r_i = -H(-\mu_i)$ for $i=1,\dots,k$.

\For{$j = 1, \dots, k$}{
    $\hat{B}_{\mathrm{fadi}}^{(k)}(j,:) = r_j$.
    
    \For{$i = j+1, \dots, k$}{
        $r_i \leftarrow r_i - \frac{\mu_j + \nu_j}{\mu_j - \mu_i} (r_i - r_j)$.
    }
}

Initialize $s_j = -H(-\alpha_j)$ for $j=1,\dots,k$.

\For{$i = 1, \dots, k$}{
    $\hat{C}_{\mathrm{fadi}}^{(k)}(:,i) = s_i$.
    
    \For{$j = i+1, \dots, k$}{
        $s_j \leftarrow s_j - \frac{\alpha_i + \beta_i}{\alpha_j - \alpha_i} (s_i - s_j)$.
    }
}

Initialize $S_i = \hat{C}_{\mathrm{fadi}}^{(k)}(:,i)$ for $i=1,\dots,k$.

\For{$j = 1, \dots, k$}{
    $R_j \leftarrow \hat{B}_{\mathrm{fadi}}^{(k)}(j,:)$.
    
    \For{$i = 1, \dots, k$}{
        $\hat{E}_{\mathrm{fadi}}^{(k)}(j,i) = \frac{R_j - S_i}{\alpha_i - \mu_j}$, $\hat{A}_{\mathrm{fadi}}^{(k)}(j,i) = S_i - \mu_j \hat{E}_{\mathrm{fadi}}^{(k)}(j,i)$, $R_j \leftarrow R_j - (\alpha_i + \beta_i) \hat{E}_{\mathrm{fadi}}^{(k)}(j,i)$, $S_i \leftarrow S_i - (\mu_j + \nu_j) \hat{E}_{\mathrm{fadi}}^{(k)}(j,i)$.
    }
}
\end{algorithm}
\subsection{Data-driven rational interpolation with prescribed poles}
This subsection presents an elegant alternative, which is theoretically equivalent to the methods in \cite{zulfiqar2026data,zulfiqar2026proj}, for performing data-driven rational interpolation with prescribed poles. The interpolant $\hat{G}^{(k)}(s)=\hat{C}^{(k)}(sI-\hat{A}^{(k)})^{-1}\hat{B}^{(k)}+\hat{D}$ with prescribed poles $\beta_i$ can be constructed from the samples $G(-\alpha_i)$ and $G(\infty)$ as follows:
\begin{align}
\hat{A}^{(k)}&=\begin{bmatrix}\hat{A}^{(k-1)}&0\\\big((\alpha_k+\beta_k)\mathbf{1}_{k-1}^\top\big)\otimes I_m&\beta_kI_m\end{bmatrix},\quad \hat{B}^{(k)}=\begin{bmatrix}\hat{B}^{(k-1)}\\(\alpha_k+\beta_k)I_m\end{bmatrix},\nonumber\\
 \hat{C}^{(k)}&=\hat{C}_{\mathrm{fadi}}^{(k)}-\big(\hat{D}-G(\infty)\big)(-\mathbf{1}_{k}^\top\otimes I_m).\label{int1_data}
\end{align}
This realization satisfies $G(-\alpha_i)=\hat{G}^{(k)}(-\alpha_i)$. The matrix $\hat{D}$ in \eqref{int1_data} may be treated as a free parameter. In most cases, $\hat{D}$ is set equal to $G(\infty)$; however, in some situations (such as zero placement, discussed in the sequel), it is advantageous to set $\hat{D}\neq G(\infty)$. The prescribed poles $\beta_i$ can be placed anywhere in the complex plane as long as the condition $\alpha_i\neq-\beta_i$ holds. In some applications, such as small-signal stability analysis in power systems, it is desirable to place the poles in the right half of the $s$-plane to capture the oscillatory behavior associated with unstable modes. If all the ADI shifts $\alpha_i$ are purely imaginary, i.e., $\alpha_i=j\omega_i$, the samples $G(-\alpha_i)$ can be measured in an experimental setting. The formula \eqref{int1_data} provides a novel way to pack the measured data into a state-space realization with user-specified pole locations.

A dual state-space realization can also be constructed. Let $\mu_i$ be ADI shifts and $\nu_i$ be desired poles satisfying $\mu_i\neq-\nu_i$. Then the state-space realization
\begin{align}
\hat{A}^{(k)}&=\begin{bmatrix}\hat{A}^{(k-1)}&\big((\mu_k+\nu_k)\mathbf{1}_{k-1}\big)\otimes I_p\\0&\nu_kI_p\end{bmatrix},\quad \hat{B}^{(k)}=\hat{B}_{\mathrm{fadi}}^{(k)}-(-\mathbf{1}_{k}\otimes I_p)\big(\hat{D}-G(\infty)\big),\nonumber\\ \hat{C}^{(k)}&=\begin{bmatrix}\hat{C}^{(k-1)}&(\mu_k+\nu_k)I_p\end{bmatrix}\label{int2_data}
\end{align}
satisfies the interpolation conditions
\[
G(-\mu_i)=\hat{G}^{(k)}(-\mu_i).
\]
A MATLAB function named \texttt{ddripp} is provided in \cite{mycodes} to construct the state-space realizations discussed in this subsection from the samples $G(-\alpha_i)$, $G(-\mu_i)$, and $G(\infty)$.
\subsection{Data-driven rational interpolation with prescribed zeros\label{SWBT}}
This subsection presents an elegant, theoretically equivalent alternative to the methods in \cite{zulfiqar2026data,zulfiqar2026proj} for performing data-driven rational interpolation with prescribed zeros. Assume that the model in \eqref{int1_data} satisfies the interpolation conditions $G(-\alpha_i)=\hat{G}^{(k)}(-\alpha_i)$ and that $\hat{D}$ is invertible. Define $\hat{C}_w^{(k)}$ and $\hat{A}_w^{(k)}$ by
\[
\hat{C}_w^{(k)}=\hat{D}^{-1}\hat{C}^{(k)}\quad \text{and}\quad \hat{A}_w^{(k)}=\hat{A}^{(k)}-\hat{B}^{(k)}\hat{C}_w^{(k)}.
\]
Next, define
\[
\hat{H}_w^{(k)}(s)=\hat{C}_w^{(k)}\big(sI-\hat{A}_w^{(k)}\big)^{-1}\hat{B}^{(k)}.
\]
Then, construct $\hat{C}_z^{(k)}$ using Algorithm \ref{alg_adi_fram}, replacing $H(-\alpha_i)$ with $\hat{H}_w^{(k)}(-\alpha_i)$.

Next, update $\hat{A}^{(k)}$ and $\hat{C}^{(k)}$ by
\[
\hat{A}^{(k)}=\hat{A}^{(k)}+\hat{B}^{(k)}\hat{C}_z^{(k)}\quad \text{and}\quad \hat{C}^{(k)}=\hat{D}\hat{C}_z^{(k)}.
\]
Then $\hat{G}^{(k)}(s)=\hat{C}^{(k)}(sI-\hat{A}^{(k)})^{-1}\hat{B}^{(k)}+\hat{D}$ has transmission zeros at $\beta_i$ and satisfies the interpolation conditions $G(-\alpha_i)=\hat{G}^{(k)}(-\alpha_i)$; see \cite{zulfiqar2026data,zulfiqar2026proj} for details.

A dual realization with transmission zeros at $\nu_i$, satisfying the interpolation conditions $G(-\mu_i)=\hat{G}^{(k)}(-\mu_i)$, can also be computed by constructing the interpolant as in \eqref{int2_data} instead of \eqref{int1_data}. A MATLAB function named \texttt{ddripz} is provided in \cite{mycodes} to construct the state-space realizations discussed in this subsection from the samples $G(-\alpha_i)$, $G(-\mu_i)$, and $G(\infty)$.
\subsection{Data-driven balanced truncation}
The data-driven low-rank balanced truncation (BT) algorithm presented in this subsection is theoretically equivalent to that presented in \cite{zulfiqar2026data,zulfiqar2026proj}; however, it avoids solving the Sylvester equations and projected Lyapunov equations required by the algorithms in \cite{zulfiqar2026data,zulfiqar2026proj}. Instead, it uses analytical formulas to construct the approximate truncated balanced realization directly from samples of the original transfer function $G(s)$.

Recall that $P$ can be approximated using G-LRCF-ADI as
\[
P\approx \big(V_{\mathrm{fadi}}^{(k)}Z^{(k)}\big)\big(V_{\mathrm{fadi}}^{(k)}Z^{(k)}\big)^*.
\]
Dually, $Q$ can be approximated using G-LRCF-ADI as
\[
Q\approx \big(Z_{\mathrm{fadi}}^{(k)}Y^{(k)}\big)\big(Z_{\mathrm{fadi}}^{(k)}Y^{(k)}\big)^*.
\]
Note that $Z^{(k)}$ and $Y^{(k)}$ are constructed non-intrusively from the analytical formula \eqref{T_analy} using $(\alpha_i,\beta_i)$ and $(\mu_i,\nu_i)$, respectively; this is implemented in the MATLAB function \texttt{compute\_CF} available in \cite{mycodes}.

Next, construct $(\hat{A}_{\mathrm{fadi}}^{(k)},\hat{B}_{\mathrm{fadi}}^{(k)},\hat{C}_{\mathrm{fadi}}^{(k)},\hat{D}_{\mathrm{fadi}},\hat{E}_{\mathrm{fadi}}^{(k)})$ from the samples $H(-\alpha_i)$, $H(-\mu_i)$, and $G(\infty)$ using Algorithm \ref{alg_adi_fram}.

Then compute the following singular value decomposition:
\begin{align} 
(Y^{(k)})^* \hat{E}_{\mathrm{fadi}}^{(k)} Z^{(k)} = \begin{bmatrix} U_1 & U_2 \end{bmatrix} \begin{bmatrix} \Sigma_r & 0 \\ 0 & \Sigma_{n-r} \end{bmatrix} \begin{bmatrix} V_1^* \\ V_2^* \end{bmatrix}. \label{bsa_svd} 
\end{align}
Next, compute the following projection matrices:
\begin{align}
W_r = Y^{(k)} U_1 \Sigma_r^{-1/2} \quad \text{and} \quad V_r = Z^{(k)} V_1 \Sigma_r^{-1/2}.  \label{bsa_proj}
\end{align}
Then the $r$th-order ROM based on G-LRCF-ADI BT can be obtained as
\begin{align}  
A_r = W_r^* \hat{A}_{\mathrm{fadi}}^{(k)} V_r,\quad  B_r = W_r^* \hat{B}_{\mathrm{fadi}}^{(k)},\quad  C_r = \hat{C}_{\mathrm{fadi}}^{(k)} V_r,\quad D_r=G(\infty). \label{bsa_rom}
\end{align}
Note that this ROM can be constructed non-intrusively from the samples $G(-\alpha_i)$, $G(-\mu_i)$, and $G(\infty)$ without accessing the state-space realization $(A,B,C,D,E)$. When all $\alpha_i$ and $\mu_i$ lie on the imaginary axis, these samples can be measured in an experimental setting. A MATLAB function named \texttt{ddbt} is provided in \cite{mycodes} to perform data-driven BT using the samples $G(-\alpha_i)$, $G(-\mu_i)$, and $G(\infty)$.

Along similar lines, data-driven implementations of FLBT, TLBT, linear quadratic Gaussian BT \cite{jonckheere1983new}, and self-weighted BT \cite{zhou1995frequency} can be obtained using the results of subsections \ref{FLBT}, \ref{TLBT}, \ref{LQGBT}, and \ref{SWBT}, respectively; these are omitted here for brevity. Such implementations are theoretically equivalent to those presented in \cite{zulfiqar2026data,zulfiqar2026proj}; however, they avoid the Sylvester equations and inverses required by the algorithms in \cite{zulfiqar2026data,zulfiqar2026proj}, which cause numerical issues. 
\section{Numerical Results}
In this section, the numerical performance of G-LRCF-ADI$^{(\text{R})}$ and the Data-driven ADI framework is evaluated using benchmark dynamical systems available in the literature. The MATLAB codes to reproduce the results in this section are publicly available in \cite{mycodes}. All tests are performed using MATLAB R2025b on a Windows 11 laptop equipped with 32 GB of random access memory (RAM) and an Intel(R) Core(TM) Ultra 9 285H 2.9 GHz processor.

\subsection{Low-rank solutions of Lyapunov equations}
\subsubsection{Standard Lyapunov equation}
The first benchmark is a semi-discretized heat transfer problem for optimal cooling of steel profiles \cite{benner2005semi}, also known as the rail model. The dimensions of the matrices are as follows: \(E \in \mathbb{R}^{317{,}377 \times 317{,}377}\), \(A \in \mathbb{R}^{317{,}377 \times 317{,}377}\), and \(B \in \mathbb{R}^{317{,}377 \times 7}\). The ADI shifts used in the LRCF-ADI algorithm are generated automatically using the shift generation strategy proposed in \cite{zulfiqar2026unified}. When the normalized residual $\frac{\|A\tilde{P}E^\top+E\tilde{P}A^\top +BB^\top\|_2}{\|BB^\top\|_2}$ drops below the tolerance \(10^{-10}\), the algorithms are considered to have converged. As discussed earlier, G-LRCF-ADI$^{(\text{R})}$ and LRCF-ADI should theoretically perform identically when $\beta_i=\overline{\alpha}_i$. To verify this numerically, we take the shifts $\alpha_i$ used by LRCF-ADI and set $\beta_i=\overline{\alpha}_i$ in G-LRCF-ADI$^{(\text{R})}$. The decay of the normalized residual is plotted in Figure \ref{fig1}. It can be seen that there are no substantial round-off errors, and both LRCF-ADI and G-LRCF-ADI$^{(\text{R})}$ perform identically.

\begin{figure}[!h]
  \centering
  \includegraphics[width=12cm]{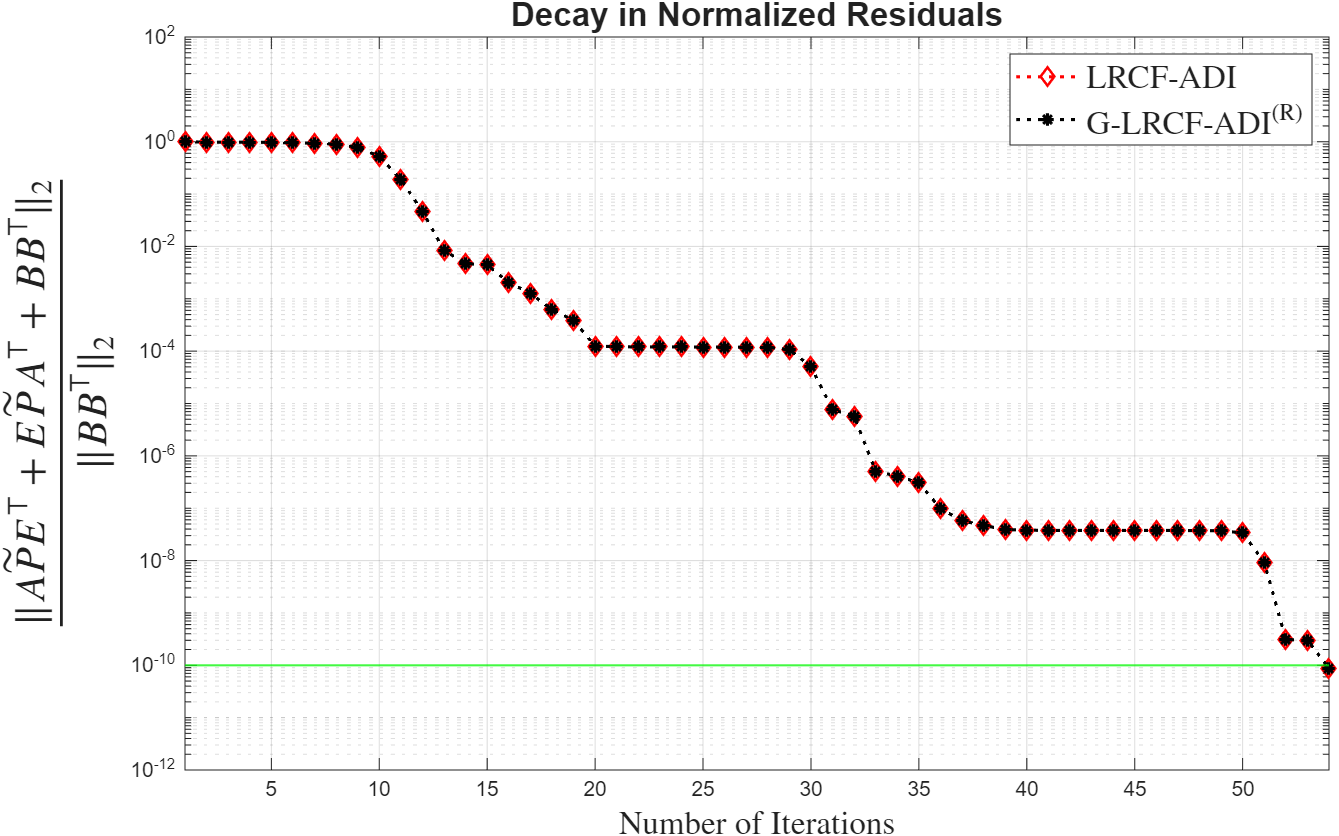}
  \caption{Decay of the normalized residual}\label{fig1}
\end{figure}

Theoretically, if we reverse the order of the prescribed poles $\beta_i$, the final approximation produced by LRCF-ADI and G-LRCF-ADI$^{(\text{R})}$ should be identical. To verify this numerically, we use the MATLAB command \texttt{fliplr} to reverse the sequence of prescribed poles $\beta_i$ in G-LRCF-ADI$^{(\text{R})}$. The decay of the normalized residual is plotted in Figure \ref{fig2}. It can be seen that round-off errors cause overflows, preventing G-LRCF-ADI$^{(\text{R})}$ from eventually converging to LRCF-ADI.

\begin{figure}[!h]
  \centering
  \includegraphics[width=12cm]{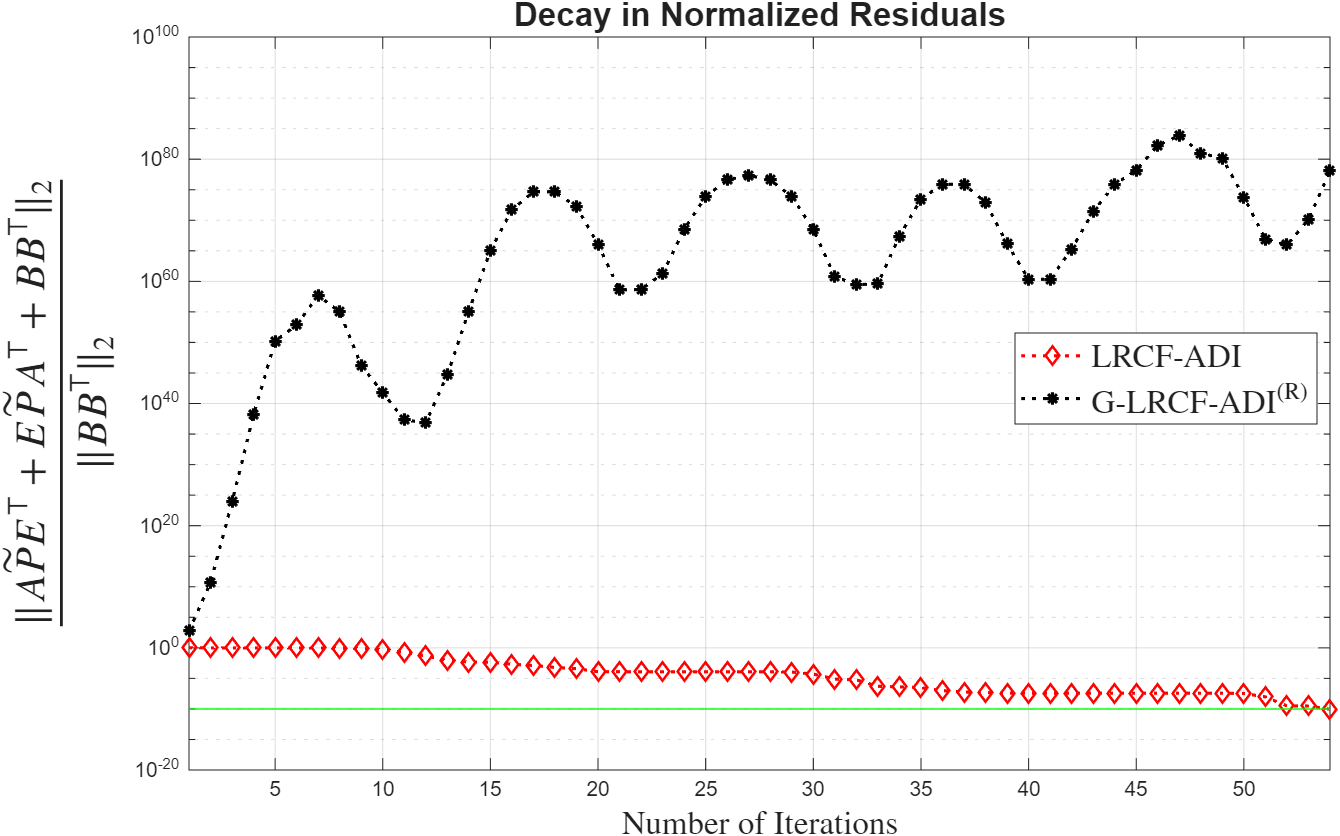}
  \caption{Decay of the normalized residual}\label{fig2}
\end{figure}

Next, to check whether G-LRCF-ADI$^{(\text{R})}$ eventually converges to LRCF-ADI in the final iteration, we limit the maximum number of iterations to $10$ so that round-off error does not accumulate excessively. It can be seen from Figure \ref{fig3} that, with a modest number of iterations, G-LRCF-ADI$^{(\text{R})}$ eventually converges to LRCF-ADI, in accordance with theoretical expectations.

\begin{figure}[!h]
  \centering
  \includegraphics[width=12cm]{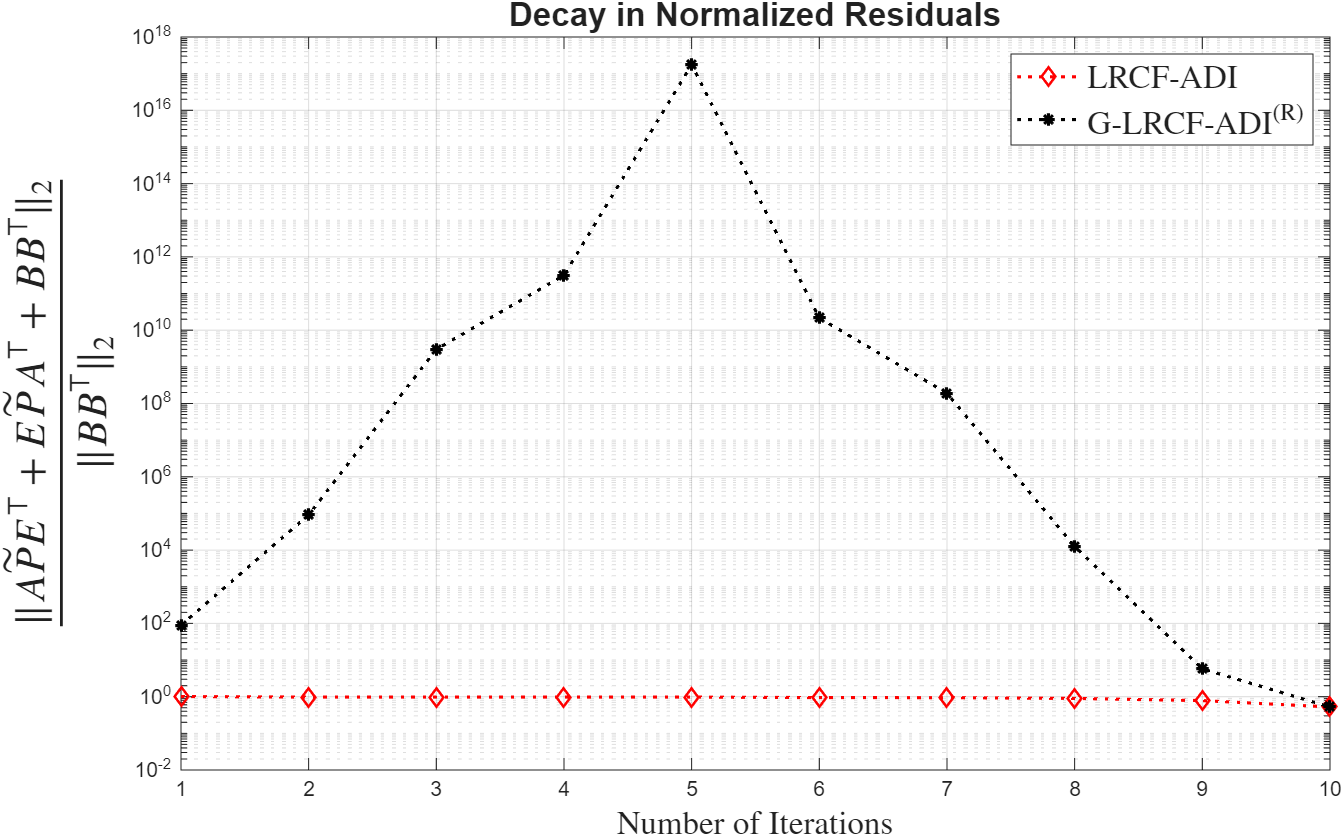}
  \caption{Decay of the normalized residual}\label{fig3}
\end{figure}
\subsubsection{Frequency- and time-limited Lyapunov equations}
This benchmark generalizes the well-known Penzl procedure for constructing a dynamical system with three peaks in the frequency response plot \cite{morPen06,chahlaoui2005benchmark}. The state-space matrices of this benchmark, of order \(n=10^5\), are as follows. The system has peaks in its frequency response at \(w_1=10\) rad/s, \(w_2=15\) rad/s, and \(w_3=20\) rad/s. Define \(A_i = \begin{bsmallmatrix} -1 & w_i \\ -w_i & -1 \end{bsmallmatrix}\) for \(i=1,2,3\), and
\(A_4 = -\mathrm{diag}(1,\dots,n{-}6)\).
The system matrices are
\begin{equation}
E=I,\quad
A = \mathrm{blkdiag}(A_1,A_2,A_3,A_4) \in \mathbb{R}^{n\times n},\quad
B = \begin{bmatrix} 10\,\mathbf{1}_6 \\ 10^{-4}\mathbf{1}_{n-6} \end{bmatrix},\quad
C = \begin{bmatrix} 10\,\mathbf{1}_6^\top & 10^{-4}\mathbf{1}_{n-6}^\top \end{bmatrix}.\label{triple_peak}
\end{equation}
Since \(A\) is block diagonal, \(F_\Omega(E,A)\) and \(e^{AE^{-1}\tau}\) are also block diagonal and can therefore be computed despite the large order \(n=10^5\). The desired frequency interval for FLBT is set to \([5,25]\) rad/s to capture the three peaks associated with the six dominant poles \(-1\pm j10\), \(-1\pm j15\), and \(-1\pm j20\). The 200 ADI shifts \(\alpha_i=\pm j\omega_i\) and the 200 ADI shifts \(\mu_i=\pm j\omega_i\) are purely imaginary, with imaginary parts chosen as exponentially spaced frequency points in \([5,25]\) rad/s. The prescribed poles are set to \(\beta_i=-1+\alpha_i\) and \(\nu_i=-1+\mu_i\). The relative errors obtained with G-LRCF-ADI\(^{(\text{R})}\) for approximating \(F_\Omega(E,A)B\) and \(F_\Omega(E,A)^\top C^\top\) are given by
\begin{align}
\frac{\|F_\Omega(E,A)B-V_{\mathrm{fadi}}^{(k)}F_{\Omega}(I,\hat{A}^{(k)})\hat{B}^{(k)}\|_F}{\|F_\Omega(E,A)B\|_F}&=8.6597\times 10^{-9},\nonumber\\
\frac{\|F_\Omega(E,A)^\top C^\top-Z_{\mathrm{fadi}}^{(k)}F_{\Omega}(I,\hat{A}^{(k)})^\top(\hat{C}^{(k)})^\top\|_F}{\|F_\Omega(E,A)^\top C^\top\|_F}&=8.6597\times 10^{-9}.\nonumber
\end{align}
The decay of the normalized residual \eqref{fl_res} for the frequency-limited controllability Gramian \(P_{\Omega}\) and the frequency-limited observability Gramian \(Q_{\Omega}\) is plotted in Figure \ref{fig4}.
\begin{figure}[!h]
  \centering
  \includegraphics[width=12cm]{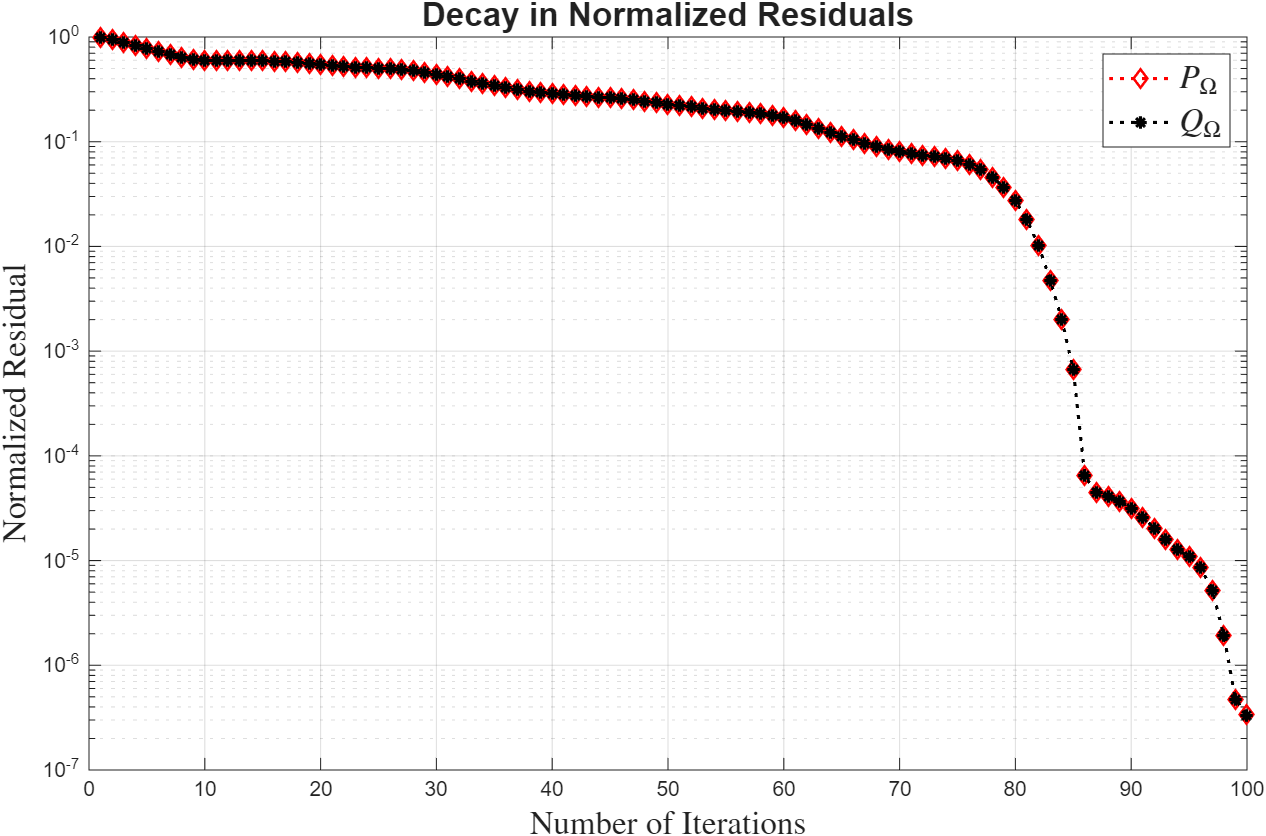}
  \caption{Decay of the normalized residual}\label{fig4}
\end{figure}

The minimum reduced order required to capture the three peaks is \(6\). In low-rank FLBT, the reduced order is set to \(r=12\) in this example. The singular values of the \(12^{\text{th}}\)-order ROM produced by low-rank FLBT are plotted in Figure \ref{fig5}. It can be seen that the ROM has peaks at \(10\), \(15\), and \(20\) rad/s, as intended.
\begin{figure}[!h]
  \centering
  \includegraphics[width=12cm]{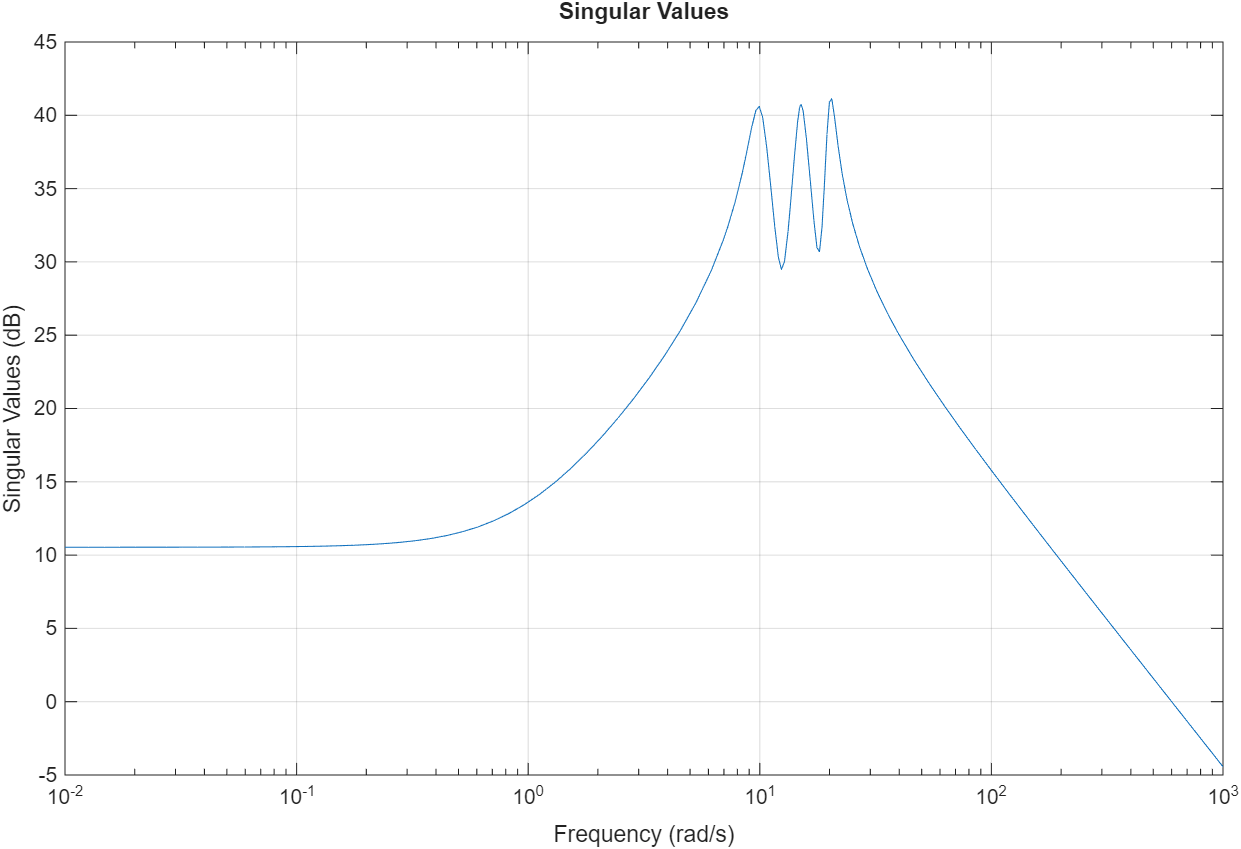}
  \caption{Singular values of the ROM}\label{fig5}
\end{figure}

The desired time interval for TLBT is set to \([0,2]\) s for demonstration purposes. The relative errors obtained with G-LRCF-ADI\(^{(\text{R})}\) for approximating \(e^{E^{-1}A\tau}E^{-1}B\) and \( e^{E^{-\top}A^\top \tau}E^{-\top}C^\top\) are given by
\begin{align}
\frac{\|e^{E^{-1}A\tau}E^{-1}B-V_{\mathrm{fadi}}^{(k)}e^{\hat{A}^{(k)}\tau}\hat{B}^{(k)}\|_F}{\|e^{E^{-1}A\tau}E^{-1}B\|_F}&=1.1423\times 10^{-5},\nonumber\\
\frac{\|e^{E^{-\top}A^\top \tau}E^{-\top}C^\top-Z_{\mathrm{fadi}}^{(k)}e^{(\hat{A}^{(k)})^\top \tau}(\hat{C}^{(k)})^\top\|_F}{\|e^{E^{-\top}A^\top \tau}E^{-\top}C^\top\|_F}&=1.1423\times 10^{-5}.\nonumber
\end{align}
The decay of the normalized residual \eqref{tl_res} for the time-limited controllability Gramian \(P_{\tau}\) and the time-limited observability Gramian \(Q_{\tau}\) is plotted in Figure \ref{fig6}.
\begin{figure}[!h]
  \centering
  \includegraphics[width=12cm]{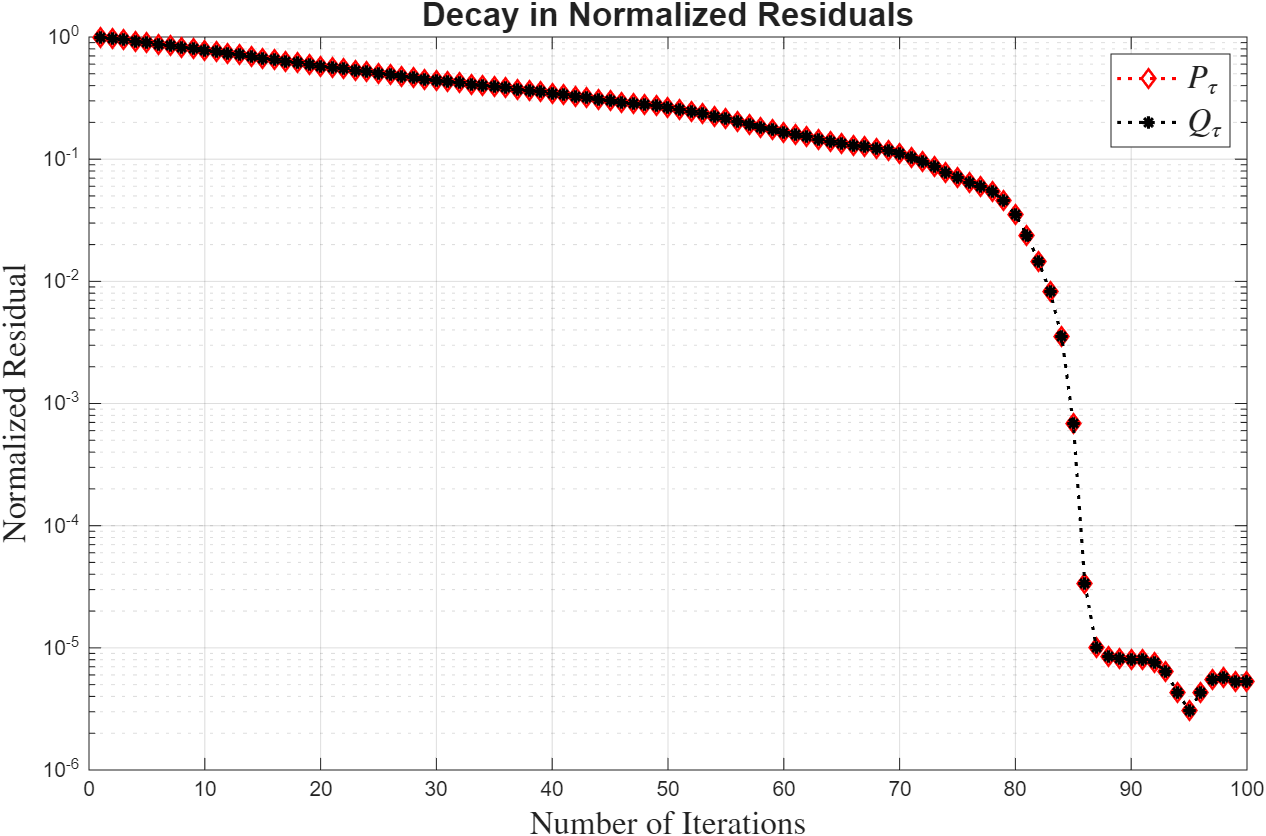}
  \caption{Decay of the normalized residual}\label{fig6}
\end{figure}

The singular values of the \(12^{\text{th}}\)-order ROM produced by low-rank TLBT are plotted in Figure \ref{fig7}. It can be seen that the ROM also has peaks at \(10\), \(15\), and \(20\) rad/s.
\begin{figure}[!h]
  \centering
  \includegraphics[width=12cm]{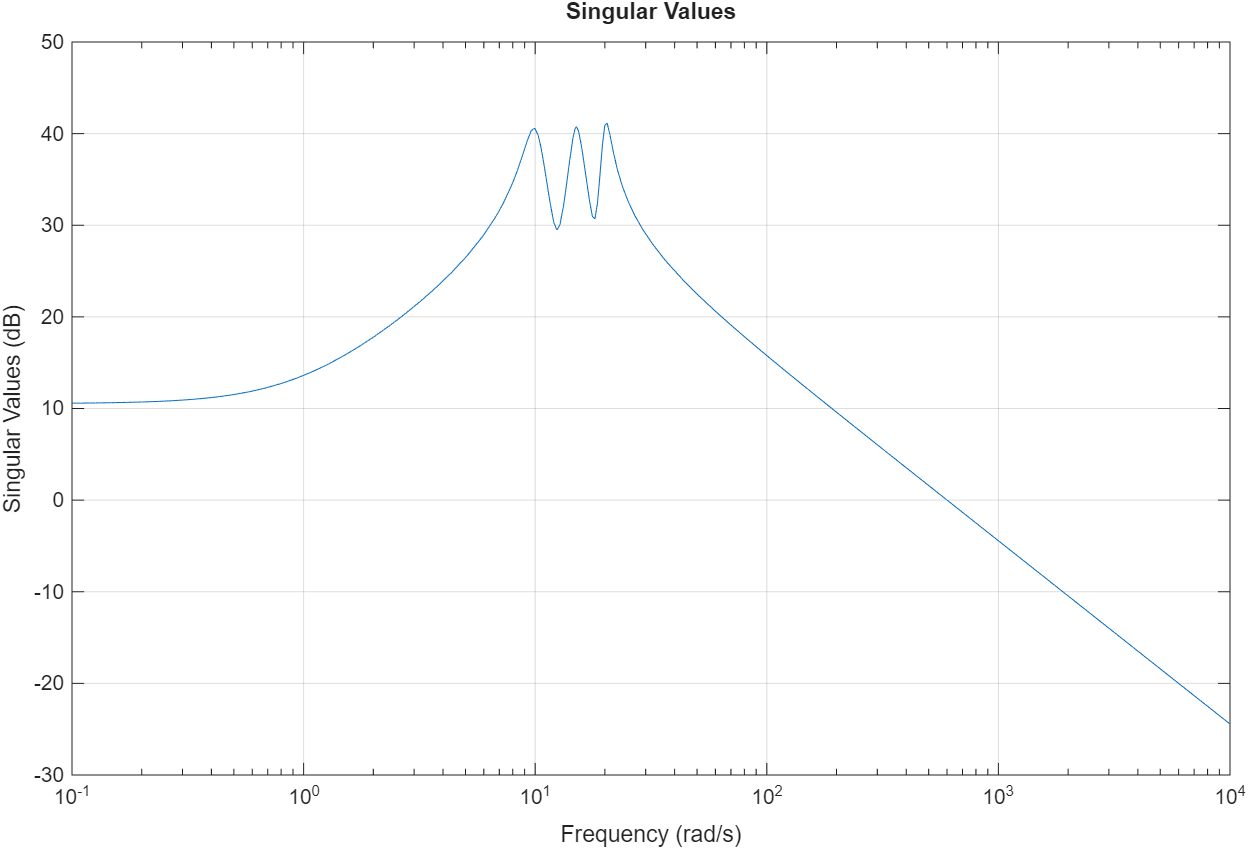}
  \caption{Singular values of the ROM}\label{fig7}
\end{figure}
\subsection{Data-driven reduced-order modeling}
Set \(n=500\), \(w_1=20\), \(w_2=50\), and \(w_3=90\) in \eqref{triple_peak} to generate the dynamical system used to test the data-driven ADI framework. This model has six dominant Hankel singular values associated with the six poles \(-1\pm j20\), \(-1\pm j50\), and \(-1\pm j90\). The remaining Hankel singular values are relatively insignificant. The 50 ADI shifts \(\alpha_i\) and the 50 ADI shifts \(\mu_i\) are purely imaginary, with imaginary parts chosen as exponentially spaced frequency points in \([10,100]\) rad/s so that this interval includes the three peaks at \(20\), \(50\), and \(90\) rad/s. The prescribed poles \(\beta_i\) and \(\nu_i\) are set to \(\beta_i=-1+\alpha_i\) and \(\nu_i=-1+\mu_i\). The \(20^{\text{th}}\)-order ROMs are constructed using BT and ADI-based data-driven BT. The Hankel singular values of the ROMs are compared in Figure \ref{fig8}. It can be seen that the data-driven BT has successfully captured the six dominant Hankel singular values of the original model.
\begin{figure}[!h]
  \centering
  \includegraphics[width=12cm]{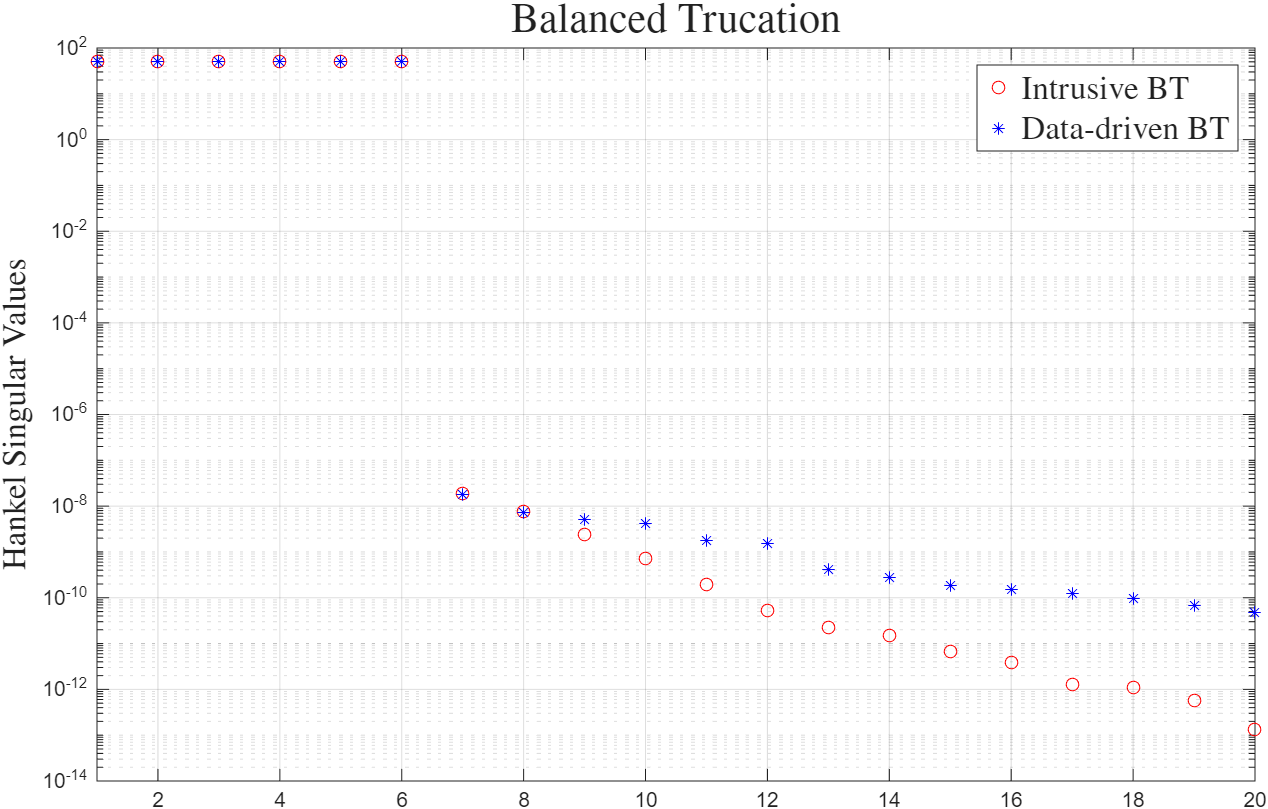}
  \caption{Comparison of Hankel singular values}\label{fig8}
\end{figure}

Next, ROMs \(G_r(s)\) of orders \(1\) to \(20\) are constructed using BT and ADI-based data-driven BT. The relative error \(\frac{\|G(s)-G_r(s)\|_{\mathcal{H}_\infty}}{\|G(s)\|_{\mathcal{H}_\infty}}\) is plotted in Figure \ref{fig9}. It can be seen that once the order reaches six, which is the minimum order required to capture the six dominant Hankel singular values, the relative error drops significantly, demonstrating that the data-driven BT has captured those dominant Hankel singular values. 
\begin{figure}[!h]
  \centering
  \includegraphics[width=12cm]{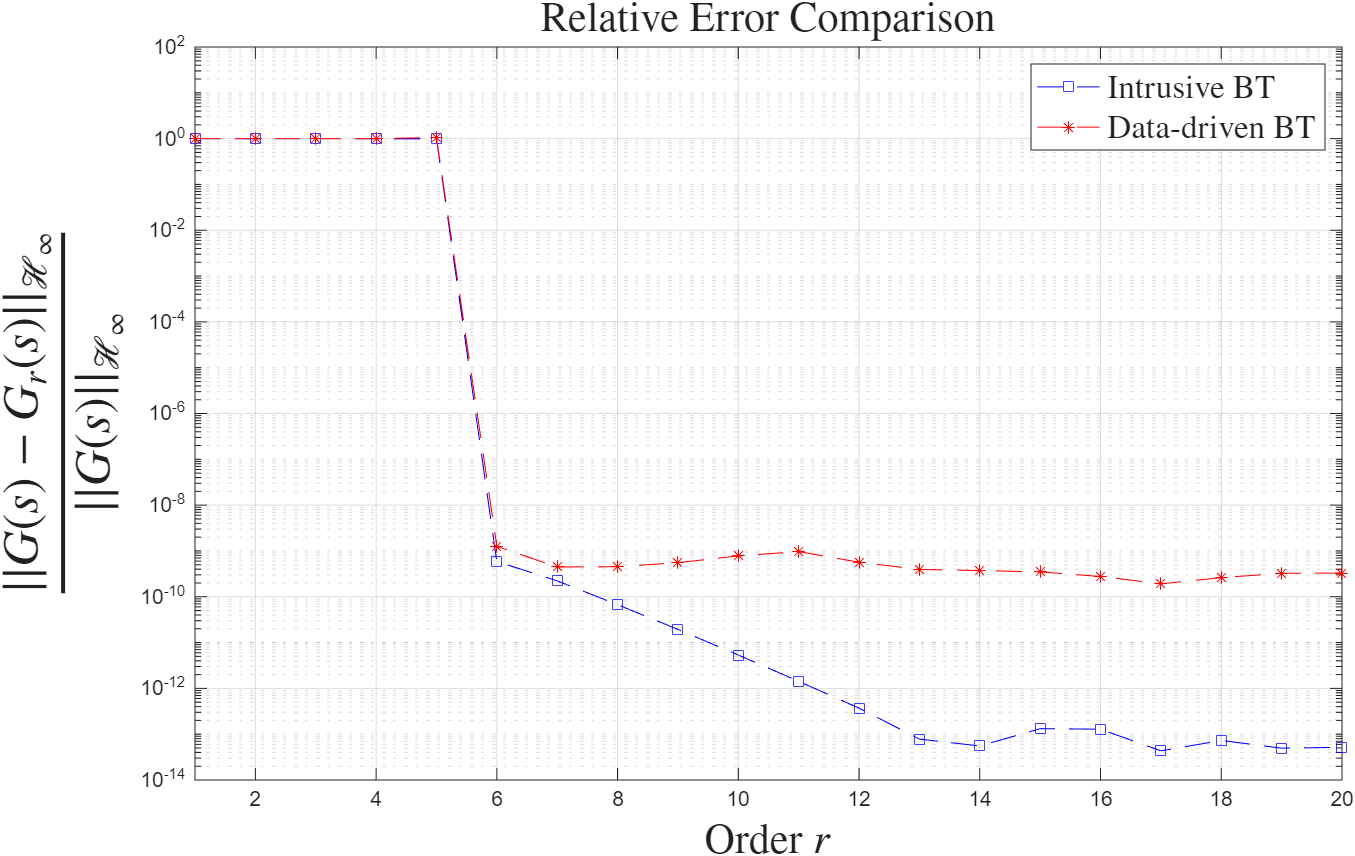}
  \caption{Comparison of relative errors}\label{fig9}
\end{figure}
\section{Conclusion}
A new Cholesky-factor ADI algorithm is proposed for computing low-rank solutions of large-scale Lyapunov equations. Unlike the existing algorithm, which requires the ADI shifts to lie in the left half of the complex plane, the proposed algorithm allows the shifts to be located anywhere in the complex plane. Owing to this property, the proposed algorithm can also be used as a numerical integration method to compute low-rank solutions of frequency-limited Lyapunov equations. Its application is further extended to time-limited Lyapunov equations and Riccati equations. The connection between Krylov-subspace-based rational interpolation and the proposed algorithm is also established. Moreover, the proposed algorithm can be used for data-driven reduced-order modeling, producing ROMs with several interesting system properties, such as pole and zero preservation. The numerical performance of the proposed algorithm is evaluated using benchmark numerical examples, and the results confirm its efficacy.
\section*{Appendix A: Derivation of the Residual Expression for Riccati Equations}
\begin{proof}
Recall that
\begin{align}
BB^\top=B_{\perp,k}^{\mathrm{fadi}}(B_{\perp,k}^{\mathrm{fadi}})^\top+B(\hat{B}^{(k)})^\top (V_{\mathrm{fadi}}^{(k)})^\top E^\top
+EV_{\mathrm{fadi}}^{(k)}\hat{B}^{(k)}B^\top-EV_{\mathrm{fadi}}^{(k)}\hat{B}^{(k)} (\hat{B}^{(k)})^\top (V_{\mathrm{fadi}}^{(k)})^\top E^\top.\nonumber
\end{align}
Next, consider the following
\begin{align}
&AV_{\mathrm{fadi}}^{(k)}\hat{P}_{\mathrm{ricc}}^{(k)}(V_{\mathrm{fadi}}^{(k)})^\top E^\top+EV_{\mathrm{fadi}}^{(k)}\hat{P}_{\mathrm{ricc}}^{(k)}(V_{\mathrm{fadi}}^{(k)})^\top A^\top +BB^\top -EV_{\mathrm{fadi}}^{(k)}\hat{P}_{\mathrm{ricc}}^{(k)}(V_{\mathrm{fadi}}^{(k)})^\top C^\top CV_{\mathrm{fadi}}^{(k)}\hat{P}_{\mathrm{ricc}}^{(k)}(V_{\mathrm{fadi}}^{(k)})^\top E^\top\nonumber\\
&=AV_{\mathrm{fadi}}^{(k)}\hat{P}_{\mathrm{ricc}}^{(k)}(V_{\mathrm{fadi}}^{(k)})^\top E^\top+EV_{\mathrm{fadi}}^{(k)}\hat{P}_{\mathrm{ricc}}^{(k)}(V_{\mathrm{fadi}}^{(k)})^\top A^\top +B_{\perp,k}^{\mathrm{fadi}}(B_{\perp,k}^{\mathrm{fadi}})^\top+B(\hat{B}^{(k)})^\top (V_{\mathrm{fadi}}^{(k)})^\top E^\top
+EV_{\mathrm{fadi}}^{(k)}\hat{B}^{(k)}B^\top\nonumber\\
&-EV_{\mathrm{fadi}}^{(k)}\hat{B}^{(k)} (\hat{B}^{(k)})^\top (V_{\mathrm{fadi}}^{(k)})^\top E^\top -EV_{\mathrm{fadi}}^{(k)}\hat{P}_{\mathrm{ricc}}^{(k)}(V_{\mathrm{fadi}}^{(k)})^\top C^\top CV_{\mathrm{fadi}}^{(k)}\hat{P}_{\mathrm{ricc}}^{(k)}(V_{\mathrm{fadi}}^{(k)})^\top E^\top\nonumber
\end{align}
Furthermore, recall that
\[
\hat{P}_{\mathrm{ricc}}^{(k)}(V_{\mathrm{fadi}}^{(k)})^\top C^\top CV_{\mathrm{fadi}}^{(k)}\hat{P}_{\mathrm{ricc}}^{(k)}=\hat{A}^{(k)}\hat{P}_{\mathrm{ricc}}^{(k)}+\hat{P}_{\mathrm{ricc}}^{(k)}(\hat{A}^{(k)})^\top +\hat{B}^{(k)}(\hat{B}^{(k)})^\top.
\]
Thus
\begin{align}
&AV_{\mathrm{fadi}}^{(k)}\hat{P}_{\mathrm{ricc}}^{(k)}(V_{\mathrm{fadi}}^{(k)})^\top E^\top+EV_{\mathrm{fadi}}^{(k)}\hat{P}_{\mathrm{ricc}}^{(k)}(V_{\mathrm{fadi}}^{(k)})^\top A^\top +BB^\top -EV_{\mathrm{fadi}}^{(k)}\hat{P}_{\mathrm{ricc}}^{(k)}(V_{\mathrm{fadi}}^{(k)})^\top C^\top CV_{\mathrm{fadi}}^{(k)}\hat{P}_{\mathrm{ricc}}^{(k)}(V_{\mathrm{fadi}}^{(k)})^\top E^\top\nonumber\\
&=\Big(AV_{\mathrm{fadi}}^{(k)}-EV_{\mathrm{fadi}}^{(k)}\hat{A}^{(k)}\Big)\hat{P}_{\mathrm{ricc}}^{(k)}(V_{\mathrm{fadi}}^{(k)})^\top E^\top+EV_{\mathrm{fadi}}^{(k)}\hat{P}_{\mathrm{ricc}}^{(k)}\Big((V_{\mathrm{fadi}}^{(k)})^\top A^\top-(\hat{A}^{(k)})^\top (V_{\mathrm{fadi}}^{(k)})^\top E^\top\Big)\nonumber\\
&+B_{\perp,k}^{\mathrm{fadi}}(B_{\perp,k}^{\mathrm{fadi}})^\top+B(\hat{B}^{(k)})^\top (V_{\mathrm{fadi}}^{(k)})^\top E^\top+EV_{\mathrm{fadi}}^{(k)}\hat{B}^{(k)}B^\top-2EV_{\mathrm{fadi}}^{(k)}\hat{B}^{(k)} (\hat{B}^{(k)})^\top (V_{\mathrm{fadi}}^{(k)})^\top E^\top\nonumber\\
&=-B_{\perp,k}^{\mathrm{fadi}}L_{\mathrm{fadi}}^{(k)}\hat{P}_{\mathrm{ricc}}^{(k)}(V_{\mathrm{fadi}}^{(k)})^\top E^\top-EV_{\mathrm{fadi}}^{(k)}\hat{P}_{\mathrm{ricc}}^{(k)}(L_{\mathrm{fadi}}^{(k)})^\top (B_{\perp,k}^{\mathrm{fadi}})^\top+B_{\perp,k}^{\mathrm{fadi}}(B_{\perp,k}^{\mathrm{fadi}})^\top+B_{\perp,k}^{\mathrm{fadi}}(\hat{B}^{(k)})^\top (V_{\mathrm{fadi}}^{(k)})^\top E^\top\nonumber\\
&+EV_{\mathrm{fadi}}^{(k)}\hat{B}^{(k)}(B_{\perp,k}^{\mathrm{fadi}})^\top\nonumber\\
&=B_{\perp,k}^{\mathrm{fadi}}(B_{\perp,k}^{\mathrm{fadi}})^\top+B_{\perp,k}^{\mathrm{fadi}}(F_{\mathrm{ricc}}^{(k)})^\top+F_{\mathrm{ricc}}^{(k)}(B_{\perp,k}^{\mathrm{fadi}})^\top.\nonumber
\end{align}
\end{proof}
\section*{Appendix B: Connection Between fADI and Krylov Subspace}
\begin{proof}
For simplicity of presentation, we restrict the proof for $m=1$. Nevertheless, the proof for $m>1$ follows directly. Let $R_i = (-\alpha_i E - A)^{-1}$ denote the resolvent operator at shift $\alpha_i$. The standard rational Krylov basis is given by $V_{\mathrm{data}}^{(k)} = \begin{bmatrix} R_1 B & \cdots & R_k B \end{bmatrix}$. We seek an upper triangular matrix $T_v^{(k)}$ such that the $i$-th column of $V_{\mathrm{fadi}}^{(k)}$ satisfies:
\begin{equation}
    v_i^{\mathrm{fadi}} = \sum_{j=1}^{i} T_v^{(k)}(j,i) R_j B.
\end{equation}
Recall that
\[
B_{\perp,i}^{\mathrm{fadi}} = B - \sum_{q=1}^{i} (\alpha_q + \beta_q) E v_q^{\mathrm{fadi}}.
\]
Substituting $v_q^{\mathrm{fadi}} = \sum_{j=1}^{q} T_v^{(k)}(j,q) R_j B$ and interchanging the order of summation yields:
\begin{equation}
    B_{\perp,i}^{\mathrm{fadi}} = B - \sum_{j=1}^{i} \left( \sum_{q=j}^{i} (\alpha_q + \beta_q) T_v^{(k)}(j,q) \right) E R_j B.
\end{equation}
The next vector in the basis is generated via $v_{i+1}^{\mathrm{fadi}} = (A + \alpha_{i+1} E)^{-1} B_{\perp,i}^{\mathrm{fadi}} = -R_{i+1} B_{\perp,i}^{\mathrm{fadi}}$. Substituting the expansion of $B_{\perp,i}^{\mathrm{fadi}}$ gives:
\begin{equation}
    v_{i+1}^{\mathrm{fadi}} = -R_{i+1} B + \sum_{j=1}^{i} \left( \sum_{q=j}^{i} (\alpha_q + \beta_q) T_v^{(k)}(j,q) \right) R_{i+1} E R_j B.
\end{equation}
We now apply the resolvent identity for the pencil $(A,E)$, which states that $R_{i+1} - R_j = (\alpha_{i+1} - \alpha_j) R_{i+1} E R_j$, or equivalently:
\begin{equation}
    R_{i+1} E R_j = \frac{R_{i+1} - R_j}{\alpha_{i+1} - \alpha_j}.
\end{equation}
Substituting this identity into the expression for $v_{i+1}^{\mathrm{fadi}}$ and regrouping the terms by the basis vectors $R_j B$, we obtain:
\begin{equation}
    v_{i+1}^{\mathrm{fadi}} = \left( -1 + \sum_{j=1}^{i} \frac{\sum_{q=j}^{i} (\alpha_q + \beta_q) T_v^{(k)}(j,q)}{\alpha_{i+1} - \alpha_j} \right) R_{i+1} B + \sum_{j=1}^{i} \frac{\sum_{q=j}^{i} (\alpha_q + \beta_q) T_v^{(k)}(j,q)}{\alpha_j - \alpha_{i+1}} R_j B.
\end{equation}
By comparing coefficients with $v_{i+1}^{\mathrm{fadi}} = \sum_{j=1}^{i+1} T_v^{(k)}(j, i+1) R_j B$, we extract the recurrence relation for the off-diagonal entries ($j \le i$):
\begin{equation}
\label{eq:T_recurrence}
    T_v^{(k)}(j, i+1) = \frac{1}{\alpha_j - \alpha_{i+1}} \sum_{q=j}^{i} (\alpha_q + \beta_q) T_v^{(k)}(j,q).
\end{equation}
We verify that the closed-form expression \eqref{eq:T_closed_form} satisfies the recurrence \eqref{eq:T_recurrence} and the initial condition. 
For the base case $i=1$, the algorithm initializes $B_{\perp,0}^{\mathrm{fadi}} = B$, yielding $v_1^{\mathrm{fadi}} = -R_1 B$. Thus, $T_v^{(k)}(1,1) = -1$, which matches \eqref{eq:T_closed_form} for $i=1, j=1$.

For the inductive step, assume \eqref{eq:T_closed_form} holds for all columns up to $i$. Substituting the inductive hypothesis into the right-hand side of \eqref{eq:T_recurrence} yields:
\begin{equation}
    T_v^{(k)}(j, i+1) = \frac{1}{\alpha_j - \alpha_{i+1}} \sum_{q=j}^{i} (\alpha_q + \beta_q) (-1)^{q} \frac{\prod_{l=1}^{q-1}(\alpha_j + \beta_l)}{\prod_{\substack{l=1 \\ l \neq j}}^{q}(\alpha_l - \alpha_j)}.
\end{equation}
Through algebraic manipulation and the properties of partial fraction decompositions (or equivalently, by recognizing the sum as a divided difference of the polynomial $p(x) = \prod_{l=1}^{i}(x + \beta_l)$), the summation simplifies exactly to:
\begin{equation}
    \sum_{q=j}^{i} (\alpha_q + \beta_q) (-1)^{q} \frac{\prod_{l=1}^{q-1}(\alpha_j + \beta_l)}{\prod_{\substack{l=1 \\ l \neq j}}^{q}(\alpha_l - \alpha_j)} = (-1)^{i+1} \frac{\prod_{l=1}^{i}(\alpha_j + \beta_l)}{\prod_{\substack{l=1 \\ l \neq j}}^{i}(\alpha_l - \alpha_j)}.
\end{equation}
Dividing by $(\alpha_j - \alpha_{i+1})$ yields:
\begin{equation}
    T_v^{(k)}(j, i+1) = (-1)^{i+1} \frac{\prod_{l=1}^{i}(\alpha_j + \beta_l)}{\prod_{\substack{l=1 \\ l \neq j}}^{i+1}(\alpha_l - \alpha_j)},
\end{equation}
which is precisely the closed-form expression \eqref{eq:T_closed_form} evaluated at column $i+1$. This completes the induction.
\end{proof}

\begin{thebibliography}{10}
\expandafter\ifx\csname url\endcsname\relax
  \def\url#1{\texttt{#1}}\fi
\expandafter\ifx\csname urlprefix\endcsname\relax\def\urlprefix{URL }\fi
\expandafter\ifx\csname href\endcsname\relax
  \def\href#1#2{#2} \def\path#1{#1}\fi

\bibitem{benner2013efficient}
P.~Benner, P.~K{\"u}rschner, J.~Saak, Efficient handling of complex shift
  parameters in the low-rank Cholesky factor ADI method, Numerical Algorithms
  62~(2) (2013) 225--251.

\bibitem{benner2013reformulated}
P.~Benner, P.~K{\"u}rschner, J.~Saak, A reformulated low-rank ADI iteration
  with explicit residual factors, PAMM 13~(1) (2013) 585--586.

\bibitem{tu2009adi}
P.~Benner, R.-C. Li, N.~Truhar, On the ADI method for Sylvester equations,
  Journal of Computational and Applied Mathematics 233~(4) (2009) 1035--1045.

\bibitem{benner2014computing}
P.~Benner, P.~K{\"u}rschner, Computing real low-rank solutions of Sylvester
  equations by the factored ADI method, Computers \& Mathematics with
  Applications 67~(9) (2014) 1656--1672.

\bibitem{zulfiqar2026unified}
U.~Zulfiqar, Z.-Y. Huang, Q.-Y. Song, Z.-Y. Gao, A unified low-rank ADI
  framework with shared linear solves for simultaneously solving multiple
  Lyapunov, Sylvester, and Riccati equations, Journal of Computational and
  Applied Mathematics (2026) 117897.

\bibitem{mycodes}
U.~Zulfiqar, \href{https://doi.org/10.5281/zenodo.21522095}{{MATLAB} codes for
  ``A new low-rank Cholesky-factor ADI algorithm allowing shifts anywhere in
  the complex plane with applications to data-driven model reduction''},
  https://doi.org/10.5281/zenodo.21522095 (2026).
\newline\urlprefix\url{https://doi.org/10.5281/zenodo.21522095}

\bibitem{wolf2014h}
T.~Wolf, $\mathcal{H}_2$ pseudo-optimal model order reduction, Ph.D. thesis, Technische
  Universit{\"a}t M{\"u}nchen (2014).

\bibitem{gawronski1990model}
W.~Gawronski, J.-N. Juang, Model reduction in limited time and frequency
  intervals, International Journal of Systems Science 21~(2) (1990) 349--376.

\bibitem{petersson2014model}
D.~Petersson, J.~L{\"o}fberg, Model reduction using a frequency-limited
  $\mathcal{H}_2$-cost, Systems \& Control Letters 67 (2014) 32--39.

\bibitem{zulfiqar2022adaptive}
U.~Zulfiqar, V.~Sreeram, X.~Du, Adaptive frequency-limited-model order
  reduction, Asian Journal of Control 24~(6) (2022) 2807--2823.

\bibitem{tombs1987truncated}
M.~S. Tombs, I.~Postlethwaite, Truncated balanced realization of a stable
  non-minimal state-space system, International Journal of Control 46~(4)
  (1987) 1319--1330.

\bibitem{benner2018radi}
P.~Benner, Z.~Bujanovi{\'c}, P.~K{\"u}rschner, J.~Saak, RADI: A low-rank
  ADI-type algorithm for large scale algebraic Riccati equations, Numerische
  Mathematik 138~(2) (2018) 301--330.

\bibitem{zulfiqar2026data}
U.~Zulfiqar, Q.-Y. Song, Z.-H. Xiao, V.~Sreeram, Data-driven implementations of
  various generalizations of balanced truncation, Computational and Applied
  Mathematics 45~(9) (2026) 394.

\bibitem{zulfiqar2026proj}
U.~Zulfiqar, From data $H(j\omega_i)$ to balanced truncation family: A
  projection-based non-intrusive approach, arXiv preprint arXiv:2602.12697
  (2026).

\bibitem{mayo2007framework}
A.~Mayo, A.~C. Antoulas, A framework for the solution of the generalized
  realization problem, Linear Algebra and Its Applications 425~(2-3) (2007)
  634--662.

\bibitem{jonckheere1983new}
E.~Jonckheere, L.~Silverman, A new set of invariants for linear
  systems--Application to reduced order compensator design, IEEE Transactions
  on Automatic Control 28~(10) (1983) 953--964.

\bibitem{zhou1995frequency}
K.~Zhou, Frequency-weighted $\mathcal{L}_\infty$ norm and optimal Hankel norm
  model reduction, IEEE Transactions on Automatic Control 40~(10) (1995)
  1687--1699.

\bibitem{benner2005semi}
P.~Benner, J.~Saak, A semi-discretized heat transfer model for optimal cooling
  of steel profiles, in: Dimension Reduction of Large-Scale Systems:
  Proceedings of a Workshop held in Oberwolfach, Germany, October 19--25, 2003,
  Springer, 2005, pp. 353--356.

\bibitem{morPen06}
T.~Penzl, Algorithms for model reduction of large dynamical systems, Linear
  Algebra and Its Application 415~(2--3) (2006) 322--343.
\newblock \href {https://doi.org/10.1016/j.laa.2006.01.007}
  {\path{doi:10.1016/j.laa.2006.01.007}}.

\bibitem{chahlaoui2005benchmark}
Y.~Chahlaoui, P.~Van~Dooren, Benchmark examples for model reduction of linear
  time-invariant dynamical systems, in: Dimension Reduction of Large-Scale
  Systems: Proceedings of a Workshop held in Oberwolfach, Germany, October
  19--25, 2003, Springer, 2005, pp. 379--392.

\end{thebibliography}

\end{document}